\documentclass[aps,prl,notitlepage,superscriptaddress,reprint]{revtex4-1}

\usepackage[utf8]{inputenc} 
\usepackage{siunitx}
\sisetup{separate-uncertainty = false} 
\DeclareSIUnit{\sample}{Sa}

\usepackage{graphicx}

\usepackage{amsmath} 
\usepackage{placeins} 

\usepackage[dvipsnames,svgnames,x11names]{xcolor}
\usepackage{float}
\usepackage{textcomp}
\usepackage{siunitx}
\usepackage{scrlayer-scrpage}

\usepackage{amssymb}
\usepackage{amsthm} 

\usepackage{xr-hyper}

\usepackage{pgfplots} 
\pgfplotsset{compat=newest}

\usepackage{hyperref}
\hypersetup{pdfprintscaling=None}

\ifdefined\drafting
	\usepackage[a4paper,left=0.25in,right=1.8in,head=21.0pt]{geometry}
\else
	
\fi

\ifdefined\showcomments{}
    \setlength{\marginparwidth}{2cm}
	\usepackage[commentmarkup=todo,todonotes={textsize=scriptsize,textwidth=1.6in}]{changes}
	\makeatletter \def\@captype{figure} \makeatother 
\else
	\usepackage[final]{changes}
\fi

\newcommand{\markhigh}[1]{\bgroup\markoverwith				
  {\textcolor{#1}{\rule[-.5ex]{2pt}{2.5ex}}}\ULon}

\definechangesauthor[name=David Reifert, color=BlueViolet]{DR}
\definechangesauthor[name=Niels Ubbelohde, color=red]{NU}
\definechangesauthor[name=Review, color=Teal]{R}
\definechangesauthor[name=Martins Kokainis, color=ForestGreen]{MK}
\definechangesauthor[name=Slava Kashcheyevs, color=blue]{VK}

%

\ifdefined\removefig
	\usepackage{comment} 	
	\excludecomment{figure}	
		
\else
\fi

\setlength{\extrarowheight}{2pt} 

\newcommand\lr[1]{\left( #1 \right)}
\newcommand\lrv[1]{\left|  #1 \right|}
\newcommand\lrk[1]{\left[    #1 \right]}
\newcommand\lrb[1]{\left\lbrace #1 \right\rbrace}
\newcommand\lra[1]{\left\langle #1 \right\rangle}

\newcommand{\mbb}[1]{\mathbb{#1}}
\newcommand{\mb}[1]{\mathbf{#1}}
\newcommand{\probQ}{P_{-}}
\newcommand{\probO}{P_{0}}
\newcommand{\probP}{P_{+}}
\newcommand{\probPQ}{\ensuremath{P_{\pm}}}
\newcommand{\probQP}{\ensuremath{P_{\mp}}}
\newcommand{\Nsim}{N_{\mathrm{sim}}}

\DeclareMathOperator{\Dir}{Dir}
\DeclareMathOperator{\Expect}{E}
\DeclareMathOperator{\Var}{Var}
\DeclareMathOperator{\Cov}{Cov}
\newcommand{\imagI}{\mathrm{i}}
\newtheorem{claim}{Claim}
\newcommand{\numlistDef}[1]{\numlist[list-pair-separator ={, },list-final-separator ={, },round-mode = places,round-precision = 2]{#1}}
\newcommand{\barX}{\overline{X}}
\DeclareMathOperator{\IQR}{IQR}

\usepackage{xpatch}
\makeatletter
\patchcmd{\@ssect@ltx}
    {\addcontentsline{toc}{#1}{\protect\numberline{}#8}}
    {}
    {}
    {}
\makeatother

\begin{document}

\title{A random-walk benchmark for single-electron circuits}

\author{David Reifert}
\affiliation{Physikalisch-Technische Bundesanstalt, 38116 Braunschweig, Germany}

\author{Martins Kokainis}
\affiliation{Faculty of Computing, University of Latvia, 19 Raina boulevard, LV-1586 Riga, Latvia}
\affiliation{Department of Physics,  University of Latvia, 3 Jelgavas street, LV-1004 Riga, Latvia}

\author{Andris Ambainis}
\affiliation{Faculty of Computing, University of Latvia, 19 Raina boulevard, LV-1586 Riga, Latvia}

\author{Vyacheslavs Kashcheyevs}
\affiliation{Department of Physics,  University of Latvia, 3 Jelgavas street, LV-1004 Riga, Latvia}

\author{Niels Ubbelohde}
\email[To whom correspondence should be addressed: ]{niels.ubbelohde@ptb.de}
\affiliation{Physikalisch-Technische Bundesanstalt, 38116 Braunschweig, Germany}

\begin{abstract}
Mesoscopic integrated circuits aim for precise control over elementary quantum systems. However, as fidelities  improve, the increasingly rare errors and component crosstalk pose a challenge for validating  error models and quantifying accuracy of circuit performance. Here we propose and implement a circuit-level benchmark that models fidelity as a random walk of an error syndrome, detected by an accumulating probe.  Additionally,  contributions of correlated noise, induced environmentally  or  by  memory, are revealed as limits of achievable fidelity by  statistical consistency analysis of the full distribution of error counts.			
Applying this methodology  to a high-fidelity implementation of on-demand transfer of electrons in  quantum dots we are able to utilize the high precision of charge counting to robustly estimate the error rate of the full circuit and its variability due to noise in the environment. As the clock frequency of the circuit is increased, the random walk reveals a memory effect.
This  benchmark contributes towards a rigorous metrology of quantum circuits.
\end{abstract}

\maketitle

Precise manipulation of individual quantum particles in complex single-electron circuits for sensors, quantum  metrology, and quantum information transfer \cite{Baeuerle2018,Pekola2013}
requires tools to certify fidelity and establish a scalable  error model.
A similar challenge arises in the gate-based approach to universal quantum computation \cite{DiVincenzo2000,Chow2012,Barends2014,Benhelm2008,Gaebler2016,Ballance2016} where benchmarking gate sequences  \cite{Emerson2005,Knill2008,Magesan2011,Huang2019,Erhard2019} are employed to validate independent-error models \cite{Arute2019} which are crucial for scaling towards fault-tolerance \cite{Aharonov2008,Fowler2012}.
Here, we introduce the idea of benchmarking by error accumulation to 
integrated single-electron circuits.
We experimentally realize  clock-controlled transfer of electrons through a chain of quantum dots, and describe the statistics of accumulated charge by a random-walk model. 
High-fidelity components and unprecedented accuracy of charge counting enable the detection of excess noise beyond the sampling error, the identification of the timescale for consecutive step interaction, and an accurate estimate for the failure probabilities of the elementary charge transfer. Abstracting errors from component to circuit level opens a path to
leverage  charge counting for microscopic certification of electrical quantities challenging the precision of metrological measurements \cite{Stein2015}, and to introduce fidelity control in building blocks of quantum circuits \cite{Takada2019,Mills2019, Nakajima2019, Freise2020}.

In quantum metrology, stability and reproducibility  of the environment  for elementary quantum entities (photons, qubits, electrons) and their  uncontrolled interactions set the practical limits on the precision of quantum circuits \cite{Smirne2016}, which approach the fundamental quantum limits, i.e.\  counting shot noise for independent identical particles, or the Heisenberg limit for entanglement-enhanced measurements \cite{Giovannetti2011}. In particular,  accurate benchmarking of fidelity in the presence of long-term drifts and memory  is difficult but essential for the validation of the precision of quantum standards.
Identifying and quantifying the residual error, i.e.\ any deviation from the perfect performance of a circuit, define the challenge to be answered by the random-walk benchmarking for high-precision single-electron current sources. Validating  consistency of the error model by  statistical testing ensures the robustness of the fidelity estimates, which is an actively studied problem in the  related context of assessing quantum computation platforms \cite{Ball2016,Epstein2014,OMalley2015,Arute2019}.

The random-walk benchmarking addresses the question of uniformity in time of repeated identical operations by error accumulation. The error signal (syndrome) considered here is the discrete charge stored in the circuit after executing a sequence of $t$ operations. The measured
 deviation $x$ in the number of trapped electrons  is modelled by the probability $p_x^t$  for a random walker to reach integer coordinate $x$ from initial position of $x=0$ in $t$ steps, see Figure \ref{fig:fig1}. 
In the desired high-fidelity limit of near-deterministic on-demand transfer of a fixed number of electrons
any residual randomly occurring errors that alter $x$ will be very rare 
and the walker will remain stationary most of the time, with occasional steps of length one. Here we study to what extent two single-step, $x \to x \pm 1$, probabilities  $P_{\pm}$ describe  the statistics of $x$ collected by repeated operation of the  circuit, and how deviations from independent error accumulation can be detected and quantified, revealing otherwise hidden physics.
The baseline random-walk model with  $t$- and $x$-independent $P_{\pm}$ predicts the following distribution:
 \begin{multline} 
	p^t_{x \ge 0}  =  (1-\probP-\probQ)^{t-x} (\probP)^x \binom{t}{x}  \times \\
	{}_2F_1\left(\frac{x-t}{2},\frac{x-t+1}{2};x+1;\frac{4\probP\probQ}{(1-\probP-\probQ)^2} \right)
	\label{eq:px}
\end{multline}
with $p^t_{x < 0}$ obtained from Eq.~\eqref{eq:px} by $x \rightarrow -x$ and $\probPQ \rightarrow \probQP$ (see derivation in Supplementary Note \ref{sec:basic_model}).
Here the first term of the product describes decay of fidelity that is exponential in $t$, while the binomial coefficient and the Gaussian hypergeometric function ${}_2F_1$ (here a polynomial of order at most $t$) take into account the self-intersecting paths as single-step errors accumulate and partially cancel at large $t$ (see Figure~\ref{fig:fig1}a).

\begin{figure*}[htbp]
	\centering
	\includegraphics[width=510pt]{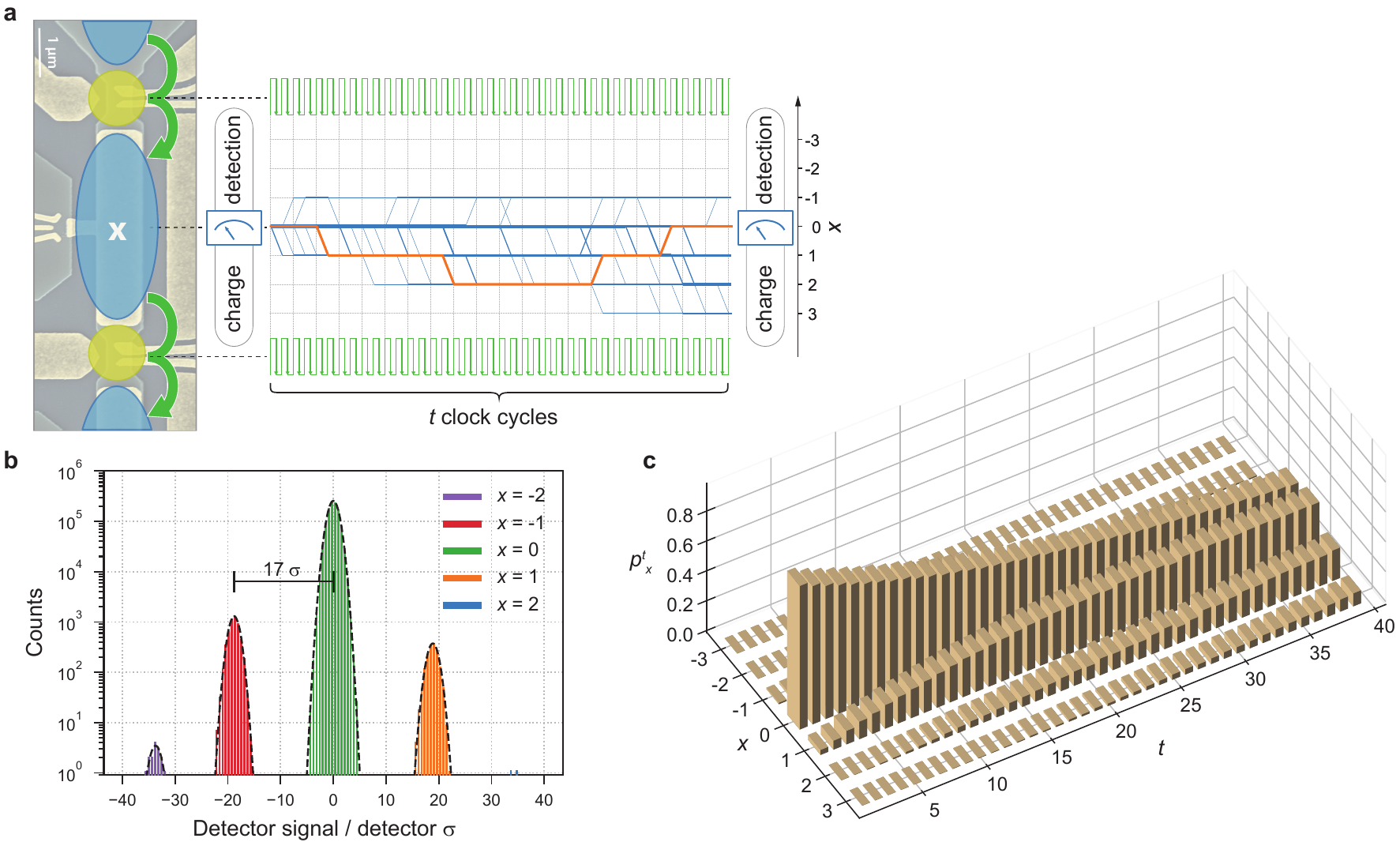}
	\caption[Random Walks]{(a) Sample micrograph and measurement scheme. After the initial charge measurement $t$ clock cycles are applied. The paths taken by 30 simulated walkers (using error rates extracted from the counting statistics) are represented by blue lines, transitioning every clock cycle in $x$ by a step of $-1,0,+1$. The frequency with which each branch is visited is indicated by the linewidth. A final charge measurement yields the end-point of the random walk as the difference between initial and final charge. The orange line exemplifies a single random walk with self-intersections. (b) Signal to noise ratio: a (typical) histogram of the differential charge detection signal with the identified difference in electron number indicated by color. The peak separation is shown in units of the  Gaussian noise amplitude $\sigma$ (black dashed lines indicate the corresponding Gaussian fits). (c) Measured  statistics of finding the walker at position $x$ after $t$ steps.} 
	\label{fig:fig1}
	
\end{figure*}

Experimentally, the high-fidelity circuit for electron transfer is realized by a chain of quantum dots in which the first and the last dot are operated as single-electron pumps \cite{Kaestner2015} and the central dot provides the error signal as shown in Figure \ref{fig:fig1}.
A clock of frequency ($f=$ \SIrange{30}{300}{\mega\hertz}) drives the pumps to transfer one electron per cycle through the chain (from top to bottom in Figure~\ref{fig:fig1}).

Within one clock-cycle, the entrance barrier to the dynamic quantum dot is lowered and raised by the pump stimulus, isolating one electron from the source reservoir and then ejecting it over the high exit barrier; barrier height asymmetry between entrance and exit defines the transfer direction \cite{Kaestner2008}. The operating points of the  pumps are chosen to minimize and approximately balance the error probabilities of transferring either zero or two electrons instead of one  (with a slight bias towards zero-electron transfers, as this error rate only increases exponentially and not double-exponentially with deviations from the optimal operating point \cite{Kashcheyevs2010}). The working points of the pumps are not retuned when operating the full circuit. Reproducible formation of quantum dots \cite{Gerster2018} allows to demonstrate high-fidelity operation of the circuit event at zero magnetic field, at which readout precision is enhanced by cryogenic reflectometry.

The excess charge $x$ from accumulating errors is inferred from a differential measurement by a charge detector capacitively coupled to the central dot, reading out the detector state  before and after each sequence transferring $t$ electrons. As tunneling events are only enabled by the clocked stimulus applied to the pumps, a long detector integration-time up to \SI{1}{ms} can be chosen for unambiguous identification of $x$ with a signal to noise ratio of 17  (Figure 1b).
A full histogram of detector states before and after the transfer sequence allows to reconstruct the shape of the Coulomb blockade peak resonance utilized by the charge detector and provides rigorous classification thresholds for the identification of $x$. The sequence of electron transfer and charge detection  is repeated with {the} repetition rate limited by the detector integration-time (up to \SI{4}{\kilo\hertz}), until a set number of counts ($N=$\numrange{1e5}{2e6})  is accumulated. Any deviations not aligned with the measurement timing, such as instabilities in the charge detector, are readily recognized and discarded, while unintended charge transitions during the operation of the pump are counted and correctly identified as errors. 

Although the individual accuracy of the active components  can exceed metrological precision~\cite{Giblin2019}, their simultaneous operation in a mesoscopic circuit \cite{Fricke2011} precludes the prediction of transfer fidelity from  component-wise characterization due to interactions and cross-talk between the elements in the chain, exemplifying the need  for circuit-level benchmarking. Experimental evidence for strong discord between component-wise and circuit-level characterization is  given in
Supplementary Note  \ref{sec:compare_pumps}.

{Here} we report the measurement results on two devices: device A introduces the methodology to resolve effects beyond statistical noise of independent error accumulation in a high-fidelity circuit, while device B demonstrates the  effects of memory with increased repetition frequency. Both devices share very similar device geometries and parameters.

Figure \ref{fig:fig2}a shows the counting statistics measured for device A  at $f=\SI{30}{MHz}$ for  $t$ up to $10^4$ compared to predictions of the  baseline model.  General trends expected from the random walk are evident:
 for short sequences, $t < 1000 \ll (\probP \, \probQ)^{-1/2}$, 
the power-law rise of the  probabilities  $p^{t}_{|x|>0}$ corresponds to the exponential decay of error-free transfer fidelity $p^{t}_0$ which remains close to $1$. For longer sequences the distribution spreads and the weight of self-intersecting  paths (e.g. orange line in Fig.\,1) increases, in accordance with Eq.~\eqref{eq:px}.

\begin{figure*}[htbp]
	\centering
	\includegraphics[width=510pt]{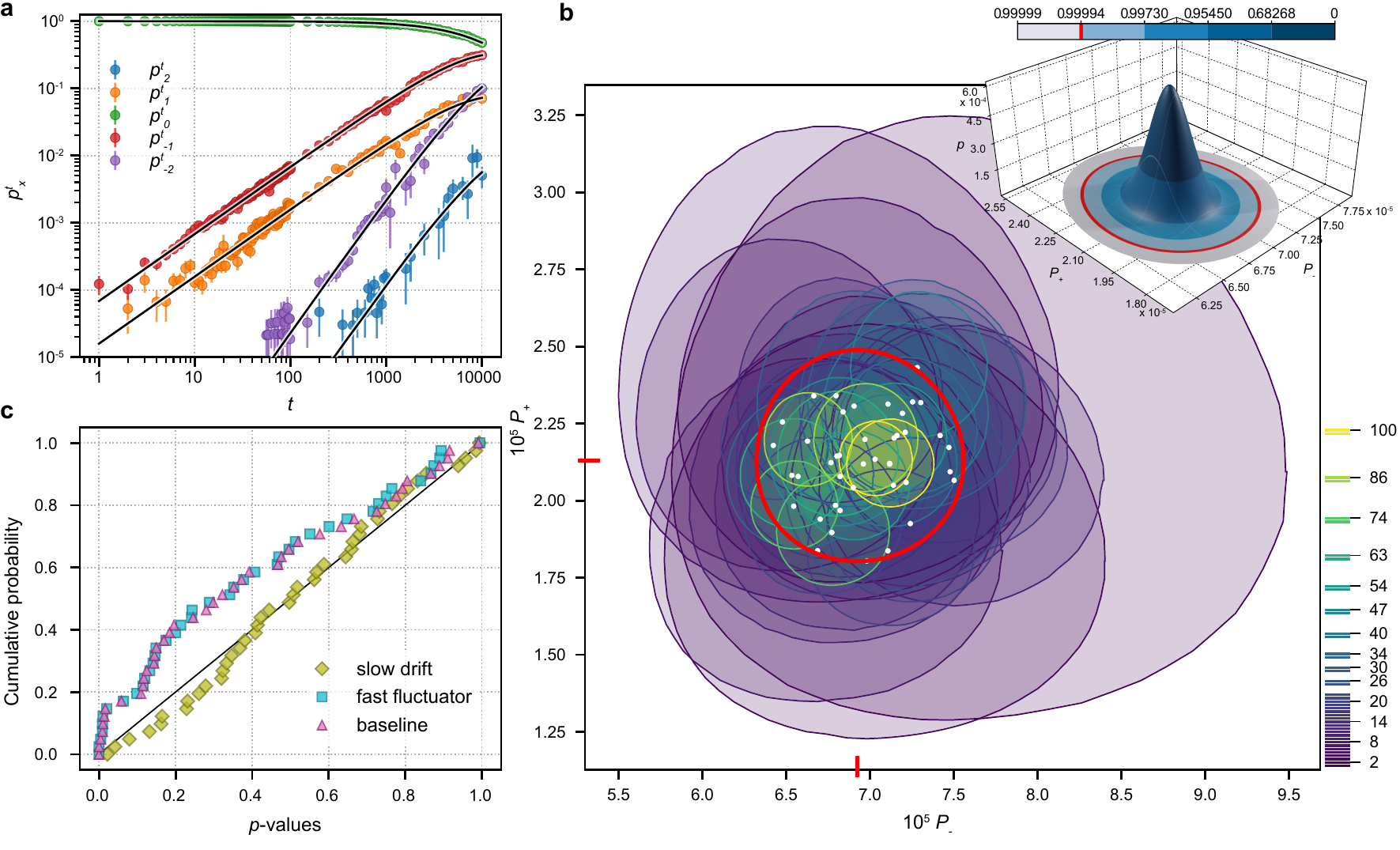}
	\caption[Consistency regions]{(a) Measured $p^t_x$ for device A; error bars are given by the standard deviation of the binomial distribution, solid lines show a least-squares fit of Eq.~\eqref{eq:px}. (b) Likelihood-maximising $\probPQ$ (white dots) and $p>0.05$ consistency regions estimated separately for each sequence length (coded by color). The inset shows the probability density function of the Dirichlet distribution with parameter $\boldsymbol\alpha=(\numlist[list-final-separator = {, },round-mode = places,round-precision = 2]{2.427195e3;3.504960e7;7.468530e2})$. The corresponding global best-fit values for $P_\pm$ are marked by red lines on the axes of the consistency-region plot. The color scale indicates the level of confidence at different coverage factors $k$ for a symmetric normal distribution; the red circle in both plots and the marker in the color scale indicates the region corresponding to $k=4$. (c) Empirical cumulative distribution of  $ p $-values for different models in comparison to the uniform distribution (black line).}
	\label{fig:fig2}
\end{figure*}

The key question for random-walk benchmarking is whether the uncorrelated residual randomness defined by two probabilities $P_{+}$ and $P_{-}$ predicts the entire probability distribution. This question is answered in three steps:
(i)  significance testing of deviations from the baseline model as a statistical null hypothesis to delineate the inevitable sampling error from model error; (ii)
 extending the  model to accommodate correlated excess noise \cite{Mavadia2018} detected in the first step;
 (iii) perform parameter estimation of the noise model
 that yields  average values of $P_{\pm}$ with an estimate of the variability.

For consistency testing, we have increased the number $N$ of samples per sequence by a factor of $\sim 10$, and limited $t$ to 100.  Fisherian significance tests \cite{Christensen2005} are used to  define  consistency regions of $p$-value greater than $0.05$ in the parameter space $(P_{+}, P_{-})$ where the baseline model cannot be rejected at this significance level (see Methods).  Figure \ref{fig:fig2}b shows quasielliptic consistency regions  
computed for each sequence length $t$
separately, randomly clustering in a tight area with the sizes shrinking roughly as $\sim 1/\sqrt{t}$, as expected. Their overlap is only partial:
best-fit global $(P_{+},P_{-})$ estimated from maximal likelihood (marked on the axes of Figure \ref{fig:fig2}b) lies outside of $7$ regions out of $42$. A more rigorous test on whether this inconsistency can be explained by sampling error alone is provided by 
Fisher's meta-analysis method (Figure \ref{fig:fig2}c): under the null-hypothesis,  the cumulative distribution of $p$-values obtained separately for each sequence length $t$ should be uniform (a straight line)~\cite{Borenstein2009,Fisher1932} (see Supplementary Note \ref{sec:comb_p-val}) which is not the case for the best-fit baseline model  (triangles in Figure \ref{fig:fig2}c). Quantitatively, the  baseline model  yields global Fisher's combined $p < 3 \times 10^{-6}$, and hence is statistically rejected. We attribute this incompatibility to excess noise
due to imperfections in the physical realization of the baseline model.
Nevertheless, the partial overlap and the tight clustering observed in Figure~\ref{fig:fig2}b suggests that the excess noise is rather small.
We model the excess noise as stochastic variability of $P_{\pm}$, and check whether it can be plausibly explained by the presence of two-level fluctuators~\cite{Paladino2014}.

To quantify the excess noise, the model is now extended (part (ii) of the outline above) by drawing the step probabilities $P_{\pm}$ randomly from a Dirichlet distribution \cite{Ng2011,Johnson97} (Supplementary Note \ref{sec:dirichlet}) over the standard 2-simplex; the corresponding parameters $\boldsymbol\alpha=\lrb{ \alpha \, \lra{P_-} , 
\alpha \,( 1-\lra{P_+}- \lra{P_{-}}) , \alpha \, \lra{P_+} }$  are specified by two means,  $\langle P_{\pm}\rangle$, 
and one additional concentration parameter $\alpha$ which controls the variance, $\Delta P_{\pm}^2 =\lra{P_{\pm}}(1-\lra{P_{\pm}})/(\alpha+1)$.
The Dirichlet distribution is strongly peaked near the mean point for $\alpha \gg \lra{P_{\pm}}^{-1}$, and always guarantees $0 \le P_{\pm} \le 1$.
This extra randomness can be introduced at different timescales~\cite{Ball2016}. Uncorrelated noise (new $P_{\pm}$ after each step of a walk)  is equivalent to the baseline model with $P_{\pm} \to \lra{P_{\pm}}$, and is already ruled out by the significance tests above.
 We compare a ``fast fluctuator'' model in which a new pair of $P_{\pm}$ is drawn independently after completion of each individual random walk versus a ``slow drift'' model  in which the values of $P_{\pm}$ are randomly reset only after all $N$ realizations for a fixed number of steps have been collected (precise excess noise model definitions are given in Supplementary Note \ref{sec:model-1} and \ref{sec:model-2}, and the data acquisition timeline is illustrated in Fig.~\ref{fig:TLF}). Although short of proper time-resolved noise metrology \cite{OMalley2015}, contrasting these two correlated-noise models gives an indication of the relevant timescales (nanoseconds versus half-hour in the experiments). 
The sensitivity of Fisher’s significance testing makes it possible to distinguish between the two models, which cannot be resolved by the second moment of  $\langle p_x^t \rangle$   as utilized, e.g., for noise-averaged fidelities in randomized benchmarking of quantum gates  \cite{Mavadia2018}.
The results of Fisher's  combined test (see Figure \ref{fig:fig2}c) favour the ``slow drift'' ($p=0.71$) over the ``fast fluctuator'' ($p<3 \times 10^{-6}$) model. The corresponding best-fitting Dirichlet distribution (parameters indicated by red lines on the axes of Figure \ref{fig:fig2}b and plotted in the inset)  gives $1\, \sigma$ uncertainty estimates $P_{-}=(6.92 \pm 0.14) \times 10^{-5}$
and $P_{+}=(2.13 \pm 0.08)\times 10^{-5}$.
Parametric instability at only a few-percent level validates a suitably extended random walk model as a robust representation of error accumulation in  this high-fidelity single-electron circuit.

In order to gain insight into a possible physics mechanism for excess noise and illustrate the robustness of statistical methods, we have simulated the experimental timeline using a random walk model with $P_{\pm}$ parameters subjected to $1/f$ noise from an ensemble of independent two-level fluctuators (see Supplementary Note \ref{sec:noise-sim}). The results follow the  general pattern outlined above:
(i) for a fixed size of the statistical sample, there is a threshold in the excess noise amplitude above which the data contradict both the baseline and the fast-fluctuator models but remain consistent with the slow-drift model.  This threshold corresponds to excess noise sufficiently  affecting probabilities of multiple errors per burst  to reveal inconsistency with Eq.~\eqref{eq:px} in the tails ($|x|>1$) of the error syndrome distribution $p_x^t$. (ii) The estimated best-fit $\Delta P_{\pm}$ parameters correlate well with the standard deviation of the $P_{\pm}$ in the underlying  simulation. (iii) Even a single fluctuator with a fixed switching rate (bimodal distribution of $P_{\pm}$  and a Poisson distribution of switching times \cite{Jenei2019}) can generate detectable excess noise still consistent with our Dirichlet-based statistical models.
As for the physics of the real device in a noisy environment, the simulations favor an explanation of the detected excess noise by the presence of multiple charge fluctuators over a single two-level system due to the absence of a bimodal signature in Fig.~\ref{fig:fig2}b.
In conclusion, accurate statistics of error counts can give enough sensitivity to
reliably estimate the baseline error rates $P_{\pm}$ and even  capture a fingerprint of long-time correlations
in the environment.

The methodology to quantify independent error accumulation described above  makes it possible to probe the effect of increased clock frequency on the circuit and thereby investigate response times of the electron shuttle and interactions between subsequent steps.
In device B the error rates are $P_{-}=(6.31 \pm 0.23) \times 10^{-3}$ and $P_{+}=(2.71 \pm 0.043) \times 10^{-2}$ at the same frequency of \SI{30}{MHz} as  device A investigated above. Ten-fold increase of the clock frequency to \SI{300}{MHz} is introduced by uniform time compression of signals controlling the transfer  operations;  
the resulting counting statistics is presented in Figure \ref{fig:fig3}a (circles). The random-walk model with constant $P_{\pm}$, described by Eq.~\eqref{eq:px}, no longer applies even qualitatively, which raises the question whether the fidelity of the circuit has decreased to a point where errors can no longer be considered rare as outlined in the beginning. This question is answered in the negative with the help of the following {theorem} defining a spread condition, which sets a precise bound on the applicability of the random-walk approach with  possibly non-stationary error rates: If distributions $(p_x^t)$ and $(p_x^{t+1})$ satisfy
\begin{align}
\sum\limits_{y=-\infty}^{x-1}p_y^t \leq \sum\limits_{y=-\infty}^{x}p_y^{t+1} \leq \sum\limits_{y=-\infty}^{x+1}p_y^{t} \quad \text{for all } x,
\label{eq:sc}
\end{align}
then there exists a set of transition probabilities $P_{\pm 1}^{(x,t)}$ such that $(p_x^{t+1})$ is generated from $(p_x^t)$ by a Markov chain
$p_{x}^{t+1}=p_x^t + \sum_{s=\pm1} \left [  P_{s}^{(x-s,t)} p_{x-s}^t  - P_{s}^{(x,t)} p_{x}^t \right ]$.
Conversely, any discrete-space, discrete-time random walk with steps of lengths at most $1$ (our definition of a high-fidelity circuit) satisfies the spread condition \eqref{eq:sc}, see Supplementary Note \ref{sec:proof-spread-condition} for proof of both claims.

\begin{figure}[tb]
	\centering
	\includegraphics[width=246pt]{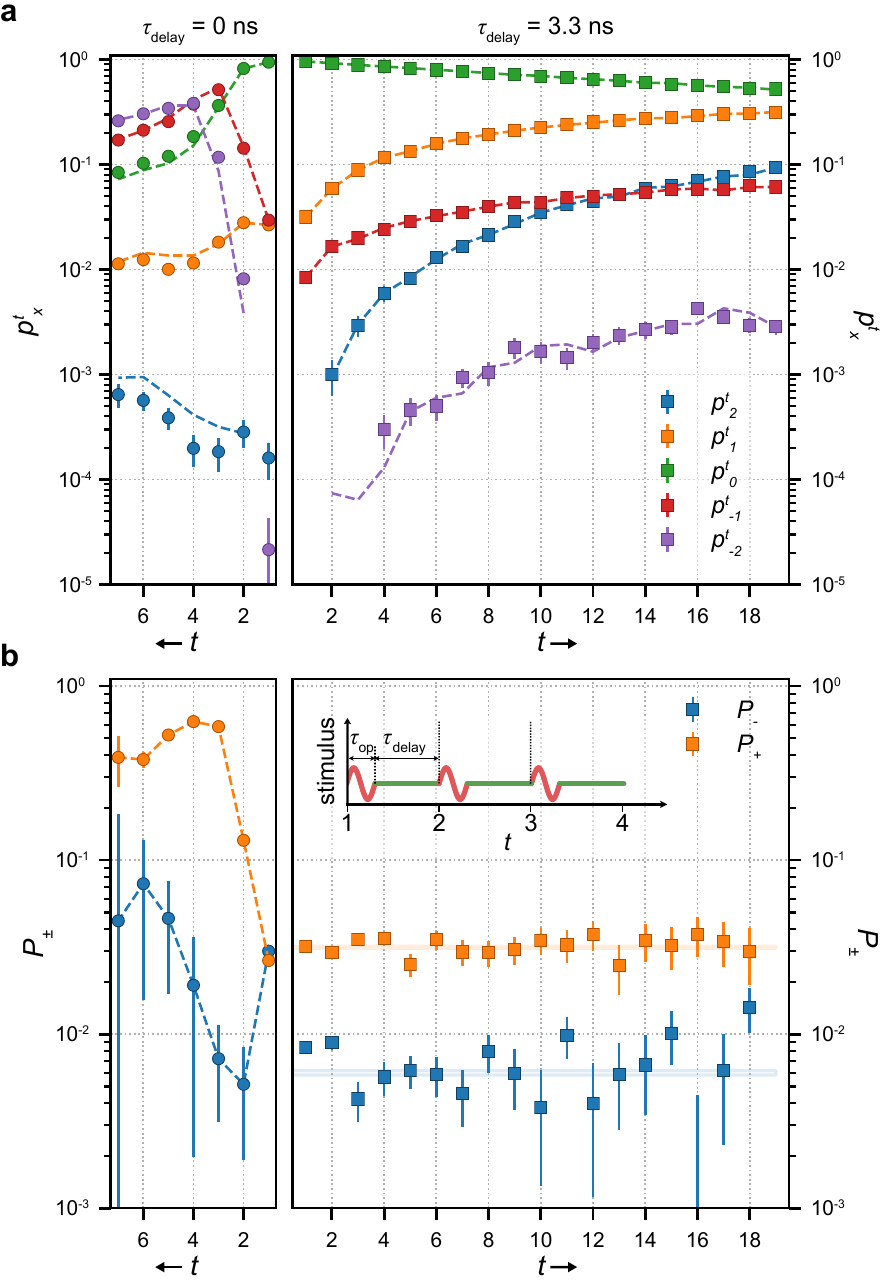}
	\caption[memory time]{(a) Measured $p^t_x$ for device B at a clock frequency of \SI{300}{\mega\hertz} and  $\tau_{\text{Delay}}=\SI{0}{\second}$ (left, $t$-axis inverted) and $\tau_{\text{Delay}}=\SI[parse-numbers = false]{3.\overline{3}}{\nano\second}$ (right). Dashed lines represent $p_x^t$ predicted by deconvolved single-step error rates and $p^{t-1}_x$. (b) Single-step error rates $\probPQ^t$ for $\tau_{\text{Delay}}=\SI{0}{\second}$ (left, $t$-axis inverted, dashed lines show guide to the eye) and $\tau_{\text{Delay}}=\SI[parse-numbers = false]{3.\overline{3}}{\nano\second}$ (right, translucent area corresponds to the $1\,\sigma$ uncertainty estimates). Inset depicts the timing diagram of the sequence -- a stimulus of duration $\tau_{\text{op}}$ drives the transfer operation followed by delay time $\tau_{\text{Delay}}$ before the next step.}
	\label{fig:fig3}
\end{figure}

We find that the distributions measured on device B do satisfy the spread condition \eqref{eq:sc} as long as all $x$ are fully resolved in counting ($t \leq 6$).
We estimate the  non-stationary but $x$-homogeneous single-step error probabilities of the corresponding Markov chains, $P_{\pm 1}^{(x,t)}=P_{\pm}^t$, by a numerical deconvolution of the Markov process equation (Supplementary Note \ref{sec:deconvolution-of--probpq-values}).
The resulting error rates $P_\pm^t$ in Fig. \ref{fig:fig3}b provide reasonable prediction (dashed lines) of the measured $p_x^t$ in Fig. \ref{fig:fig3}a (circles).
The $t$-dependence of $P_{\pm}^t$ is strong and reproduced well above the noise.
This implies memory: probabilities for the next step depend on how many steps have taken place before.
$P_{\pm}^t$ do not saturate within $t\le 6$ indicating a long memory time of more than $6\,\tau_{\text{op}} =\SI{20}{ns}$.

To probe this memory effect, we introduce a delay time  $\tau_\mathrm{Delay}$ between otherwise unaltered signals driving the transfer operations thus extending the physical time $f^{-1}$ corresponding to a single step of the random walk from $\tau_{\text{op}}$ to $\tau_{\text{op}} + \tau_{\text{Delay}}$ as sketched in Figure \ref{fig:fig3}b. With increasing delay, a gradual reduction of the $t$-dependence in $P_{\pm}^{t}$ is observed until, for $\tau_\mathrm{Delay} > \SI{3}{\nano\second}$ (see right part of Fig.\,\ref{fig:fig3}a and b), the stationary behaviour consistent with the baseline model is recovered. Surprisingly, $\tau_{\text{Delay}}$ sufficient to recover stationary behaviour is on the order of a single step duration $\tau_{\text{op}}$, significantly shorter than the number of steps with pronounced memory effect at $\tau_{\text{Delay}}=\SI{0}{\nano\second}$ (Figure \ref{fig:fig3}b). Both times are significantly longer than the expected timescales in GaAs systems for relaxation  via electron-electron or phonon interaction \cite{Ridley1991,Snoke1992,Molenkamp1992}, and raise the need for a dedicated investigation. In Fig.\,\ref{fig:fig3}b $P^t_{\pm}$, estimated at each $t$ by deconvolution (squares), are compared with the confidence intervals of the ``slow-drift'' model with stationary $P_{\pm}$ (color bands). The comparison shows good agreement and is consistent with our framework for random-walk benchmarking of high-fidelity single-electron circuits. 
For the showcased device, 
circuit-level interactions and  memory effects significantly lower the  attainable clock speed compared to record frequencies for individual pumps reported in the literature \cite{Yamahata2016}.
However, benchmarking by error accumulation introduces a  tool to investigate these limitations and  identify possible mitigation-techniques since $\tau_{\text{op}}$ and $\tau_{\text{Delay}}$ can be freely adjusted with error rates still accurately estimated on the circuit level, as long as these remain within the high-fidelity bound monitored by the spread condition.

In conclusion, the view of single-electron components as elements of a digital circuit has enabled an abstract and universal description of fidelity in terms of the random walk of an error syndrome. 
Accumulation of errors over long sequences allows to probe fast and accurate  operations beyond the bandwidth of a slow single-charge detector. The accompanying statistical methodology quantifies the stability of the error process and uncovers short memory times,
both of which are elusive to direct observation.
In quantum metrology, an accurate estimate of the circuit error has an immediate application: the variance of the current $I=(I_s + I_d)/2$ flowing into ($I_s$) and out of ($I_d$) the circuit is given by the variance of the differential charge $x$, which corresponds to the displacement current $I_s-I_d = e f x/t$. Hence, the variance of $x$,   $\Delta x^2 \approx (\langle P_{+}\rangle +\langle P_{-} \rangle) \, t + (\Delta P_{+}^2 +\Delta P_{-}^2) \, t^2$,   provides a bound for the deviation of the current $I$ from the error-free value $ e f$, enabling  counting-verification of a primary standard for the ampere.
In the broader context,
sensitive tests of single-electron circuits create new ground for developing benchmarking techniques of engineered quantum systems.

\section*{Methods}
\label{sec:methods}

Devices A and B were fabricated from GaAs/AlGaAs heterostructures with  two dimensional electron gas (2DEG) nominally \SI{90}{nm} below the surface. Quantum dots are formed by CrAu top gates depleting a shallow-etched mesa \cite{Gerster2018}. The charge detector is formed against the edge of a separate mesa and capacitively coupled to the central quantum dot via a floating gate \cite{Fricke2014}.
All measurements were performed in a dilution refrigerator at a base temperature of \SI{20}{\milli\kelvin} and \SI{0}{\tesla} external field. The charge detector signal is read out by rf reflectometry \cite{Schoelkopf1998}. Sinusoidal pulses generated by arbitrary waveform generators modulate the entrance barriers of the single electron pumps and drive the clock-controlled electron transfer \cite{Kaestner2015}. The drift-stability \replaced[id=R]{due to}{of the} control voltages is estimated to be better than $10^{-8}$. Charge transfer and detector readout are triggered in a sequence: (i) readout of the initial detector state, (ii) application of $t$ sinusoidal pulses to both pumps simultaneously, (iii) readout of the final detector state, (iv) reset by connecting the intermediate dot to source. The difference between  initial and final detector state yields the charge $x$ deposited on the central quantum dot by the burst transfer, providing raw data for subsequent statistical analysis.
Fisher's $p$-value for each experimentally measured $x$-resolved set of $N$ counts is defined as the probability of an equally or more extreme outcome under the null-hypothesis being tested (either the  baseline random walk or one of the two excess noise models with Dirichlet-distributed $P_{\pm}$); it is evaluated by Monte Carlo sampling as described in the Supplementary Notes \ref{sec:est_conf_region} and \ref{sec:dirichlet}.

\section*{Acknowledgements}
We acknowledge T. Gerster, L. Freise, H. Marx, K. Pierz, and T. Weimann for support in device fabrication, J. Valeinis for discussions. D.R. additionally acknowledges funding by the Deutsche Forschungsgemeinschaft (DFG) under Germany’s Excellence Strategy – EXC-2123 – 390837967, as well as the support of the Braunschweig International Graduate School of Metrology B-IGSM. M.K., A.A., and V.K are supported by Latvian Council of Science (grant no.~lzp-2018/1-0173). A.A. also acknowledges support by `Quantum algorithms: from complexity theory to experiment' funded under ERDF programme 1.1.1.5.

\begingroup
\renewcommand{\addcontentsline}[3]{}

\endgroup

\newcommand{\beginsupplement}{			
	\setcounter{table}{0}
	\renewcommand{\thetable}{S\arabic{table}}
	\setcounter{figure}{0}
	\renewcommand{\thefigure}{S\arabic{figure}}
	\setcounter{secnumdepth}{2}
}

\renewcommand\bibsection{\section*{\refname}}

\onecolumngrid
\newpage
\section*{Supplemental information}\label{sec:supp}	

\beginsupplement

	\makeatletter
	\let\@seccntformatorig\@seccntformat
	\def\@seccntformat#1{%
		\ifnum0=\pdfstrcmp{#1}{section}%
		SUPPLEMENTARY NOTE \csname the#1\endcsname.{} 
		\else
		\@seccntformatorig{#1}%
		\fi
	}

	\makeatother
	

	\tableofcontents
	
	\addtocontents{toc}{\protect\setcounter{tocdepth}{2}}
	
	\newpage
	
	\section{Baseline model}\label{sec:basic_model}
	Consider a time-homogeneous discrete-time random walk on the set of integers which starts at 0 and at each step moves $ +1 $ with probability $ \probP  $, moves $ -1 $  with probability $ \probQ $ and stays at the same vertex with probability $ \probO $; here we assume $ \probP, \probQ, \probO \in (0,1) $,  $  \probQ+ \probO+ \probP = 1 $.
	
	To describe this process formally,  consider  a random variable $ \mb K = (K_{-1}, K_0, K_{+1}) $ following a multinomial distribution with $ t>0$ trials and three categories,  with associated probabilities $ \probQ $, $ \probO $ and $ \probP $, respectively. Then the random variable $X= K_{+1} - K_{-1}  $ corresponds to the position of the random walker after $ t $ steps, since all steps can be modeled with independent discrete random variables with three possible outcomes ($ -1 $, $ 0 $ and $ +1 $, respectively) and respective probabilities  $ \probQ $, $ \probO $ and $ \probP $. 
	First we show that the probability mass function of the discrete variable    $X  \in \{-t, -t+1, \ldots, t-1, t\} $   is given by  \eqref{eq:px} in the main text; i.e., let $ {p_{x}^t}:=   \Pr(X=x)  $, then
	\begin{claim}
		The probability mass function (PMF) of the variable $ X $   is
		\begin{equation}\label{eq:ptx_model0}
		{p_{x}^t} =
		\begin{cases}
		(\probP)^x (\probO)^{t-x} \, \binom{t}{x} \, {}_2F_1\left(\frac{x-t}{2},\frac{x-t+1}{2};x+1;\frac{4\probP\probQ}{\probO^2} \right),   &  x \geq 0,\\
		(\probQ)^{-x} (\probO)^{t+x} \, \binom{t}{-x} \, {}_2F_1\left(\frac{-x-t}{2},\frac{-x-t+1}{2};-x+1;\frac{4\probP\probQ}{\probO^2} \right), & x < 0,
		\end{cases}
		\end{equation}
		for $ x \in  \{-t, -t+1, \ldots, t-1, t\} $. 
	\end{claim}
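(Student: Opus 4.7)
The plan is to compute the probability mass function directly from the multinomial distribution of $\mathbf{K}$ and then recognize the resulting finite sum as a terminating Gaussian hypergeometric series.

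First I would fix $x \ge 0$ and parametrise the triples $(K_{-1},K_0,K_{+1})$ that realise $X=x$. Setting $K_{-1}=k$ forces $K_{+1}=k+x$ and $K_0 = t-2k-x$, with $k$ ranging over $0 \le k \le \lfloor (t-x)/2 \rfloor$. Writing out the multinomial PMF then gives
\begin{equation*}
p_x^t = \sum_{k=0}^{\lfloor (t-x)/2\rfloor} \frac{t!}{k!\,(k+x)!\,(t-2k-x)!}\,\probQ^{k}\,\probO^{t-2k-x}\,\probP^{k+x}.
\end{equation*}
Factoring out the $k=0$ contribution, which equals $\binom{t}{x}\probP^x\probO^{t-x}$, reduces the statement to identifying the residual sum with ${}_2F_1$ evaluated at $z=4\probP\probQ/\probO^{2}$.

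The heart of the proof is a ratio-of-terms computation. Let $a_k$ denote the $k$-th summand after factoring. Then
\begin{equation*}
\frac{a_{k+1}}{a_k} = \frac{(t-2k-x)(t-2k-x-1)}{(k+x+1)(k+1)}\cdot\frac{\probP\probQ}{\probO^{2}}.
\end{equation*}
The key algebraic identity is the factorisation
\begin{equation*}
(t-2k-x)(t-2k-x-1) = 4\!\left(k+\tfrac{x-t}{2}\right)\!\left(k+\tfrac{x-t+1}{2}\right),
\end{equation*}
obtained by pulling out a factor of $-2$ from each linear term. Substituting this back shows that $a_{k+1}/a_k$ equals the standard ratio $\frac{(\alpha+k)(\beta+k)}{(\gamma+k)(k+1)}\,z$ for a Gaussian hypergeometric series with $\alpha=(x-t)/2$, $\beta=(x-t+1)/2$, $\gamma=x+1$, and $z=4\probP\probQ/\probO^{2}$. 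Since $a_0=1$, the sum matches ${}_2F_1(\alpha,\beta;\gamma;z)$ term-by-term; moreover, because one of $\alpha$ or $\beta$ is a non-positive integer (as $x\le t$), the series terminates automatically at the correct upper limit, so extending the sum to $\infty$ changes nothing. This yields \eqref{eq:px} for $x\ge 0$.

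For $x<0$ I would invoke symmetry rather than repeat the computation: the joint distribution of $(K_{-1},K_0,K_{+1})$ is invariant under the simultaneous swap $K_{\pm 1}\leftrightarrow K_{\mp 1}$ and $\probPQ\leftrightarrow\probQP$, which sends $X\mapsto -X$. Applying the case $x\ge 0$ to $-X$ with the swapped probabilities gives the second branch of \eqref{eq:ptx_model0}. The main obstacle is purely bookkeeping: the hypergeometric rewriting requires recognising the half-integer shifts $(x-t)/2$ and $(x-t+1)/2$ arising from the factor-of-two spacing of the walker's allowed endpoints, and making sure the terminating convention of ${}_2F_1$ lines up with the finite range of $k$; once the ratio identity above is written down, everything else is routine.
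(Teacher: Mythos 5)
Your proposal is correct and follows essentially the same route as the paper: the same parametrisation of the multinomial outcomes realising $K_{+1}-K_{-1}=x$, the same finite sum, and the same identification with a terminating ${}_2F_1$ at argument $4\probP\probQ/\probO^2$. The only cosmetic difference is that you verify the hypergeometric identification via the consecutive-term ratio, whereas the paper telescopes the same factorisation into Pochhammer symbols via $(-a)_s=(-1)^s(a-s+1)_s$; your explicit symmetry argument for $x<0$ is a clean way to fill in what the paper leaves as "similar considerations."
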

	\begin{proof}
		Suppose that $ x\geq 0 $; then   the event $ X=x $, i.e., the event of the random walker being at the position $ x $ after $ t $ steps, is equivalent   to the event that the multinomially distributed  variable $ \mb K = (K_{-1}, K_0, K_{+1}) $ satisfies $ K_{+1} - K_{-1} = x $ (i.e., to the event that the random walker has moved $ K_{+1} $ steps to the right and $ K_{-1} = K_{+1} -x $ steps to the left). 
		Therefore $  \Pr(X=x) $ can be obtained from the multinomial distribution's PMF as 
		\[ 
		\Pr(X=x) = \sum_{\substack{\mb k: \\   k_1 -  k_{-1} = x \\  k_{-1} + k_0 + k_1 = t }} \Pr(\mb K = \mb k ).
		\]
		
		Since $\mb K $ follows a multinomial distribution with $ t>0$ trials and three categories  with respective  probabilities $ \probQ $, $ \probO $ and $ \probP $,  its PMF is given by
		\[ 
		\Pr(\mb K = \mb k ) = \frac{t!}{k_{-1}!  k_0!  k_1!}   \, \probQ^{k_{-1}}  \probO^{k_0}  \probP^{k_1}, 
		\]
		where $ \mb k := (k_{-1}, k_0, k_1) $ is a vector of nonnegative integers satisfying $ k_{-1}+k_0+k_1=t $.
		
		Notice that such $ \mb k $ can additionally satisfy $ k_1 - k_{-1} = x \geq 0 $ only if  $ \mb k $ is of the form
		$ \mb k  = (s,  t-x-2s,  x+s) $, for some nonnegative integer $ s $. Moreover, since $ k_0 \geq 0 $, we obtain the constraint $  t-x-2s \geq 0$, i.e., $ s \leq   0.5 (t-x) $. We conclude that all suitable vectors $ \mb k $ are parametrized by  a nonnegative integer $ s  $, which is upper-bounded  by $ 0.5 (t-x) $ (more precisely, the maximal valid $ s  $ value is  the floor function of $ 0.5 (t-x) $). For each such $ s $ the corresponding vector is  $ \mb k= (s,  t-x-2s,  x+s)   $ and the probability of the event $ \mb K = \mb k $ is  
		\[ 
		\Pr(\mb K =  (s,  t-x-2s,  x+s)   ) =  \frac{t!}{s!  (x+s)!  (t-x-2s)!}   \, \probQ^{s}  \probO^{t-x-2s}  \probP^{x+s} ,
		\]
		respectively.
		Thus  $ p_{x}^t:= \Pr(X=x) $ is simply the sum of these multinomial probabilities:
		\begin{align*}
		p_{x}^t & = \sum_{\substack{\mb k: \\   k_1 -  k_{-1} = x \\  k_{-1} + k_0 + k_1 = t }} \Pr(\mb K = \mb k )
		= \sum_{s = 0}^{ (t-x)/2}  \Pr \lr{ \mb K = (s, t-x-2s,  x+s)  }   \\
		& =    \sum_{s =0}^{ (t-x)/2}  \frac{t!}{s!  (x+s)!  (t-x-2s)!}   \, \probQ^{s}  \probO^{t-x-2s}  \probP^{x+s} .
		\end{align*}
		The latter quantity can be equivalently expressed as
		\begin{align*}
		p_{x}^t
		& =\probP^x \probO^{t-x}  \, 
		\binom{t}{x} \sum_{s = 0}^{ (t-x)/2}   \frac{ (t-x)!}{  (x+s)!  (t-x-2s)!  s!}   \,  \lr{\frac{\probP\probQ}{\probO^2}  }^s .
		\end{align*}
		Let us show that 
		\begin{equation}\label{eq:ptx_model0-1}
		\sum_{s = 0}^{ (t-x)/2}   \frac{ (t-x)!}{  (x+s)!  (t-x-2s)!  s!}   \,  \lr{\frac{\probP\probQ}{\probO^2}  }^s 
		=
		\, {}_2F_1\left(\frac{x-t}{2},\frac{x-t+1}{2};x+1;\frac{4\probP\probQ}{\probO^2} \right),
		\end{equation}
		which  will establish \eqref{eq:ptx_model0} and conclude the proof.
		
		We start by rewriting the LHS of \eqref{eq:ptx_model0-1}. Observe that
		$ (x+s)! =(x+1)_s $,
		where $ (a)_s  := a(a+1) \ldots (a+s-1) $ stands for the Pochhammer's symbol. Furthermore,
		\begin{align*}
		\frac{(t-x)!}{(t-x-2s)!  }
		&= 
		(t-x)(t-x-1) \ldots (t-x+1-2s) \\
		&=
		4^s    \lr{\frac{t-x}{2} - s+1 }_s   \lr{\frac{t-x-1}{2} - s+1 }_s\\
		&= 4^s \lr{\frac{x-t}{2}}_s\lr{\frac{x-t+1}{2}}_s 
		,
		\end{align*}
		where the last step applies the identity $ (-a)_s = (-1)^s (a-s+1)_s $. Therefore  
		\[ 
		\sum_{s = 0}^{ (t-x)/2}   \frac{ (t-x)!}{  (x+s)!  (t-x-2s)!  s!}   \,  \lr{\frac{\probP\probQ}{\probO^2}  }^s 
		=
		\sum_{s = 0}^{ (t-x)/2}   \frac{ \lr{\frac{x-t}{2}}_s\lr{\frac{x-t+1}{2}}_s }{  (x+1)_s   \,  s!}   \,  \lr{\frac{4\probP\probQ}{\probO^2}  }^s .
		\]
		The upper limit $  (t-x)/2 $ in the latter sum can be replaced with infinity, since the numerator $ \lr{\frac{x-t}{2}}_s\lr{\frac{x-t+1}{2}}_s $ is zero for the additional terms with $ s > 0.5(t-x) $.
		It remains to recognize now that the sum coincides with the definition of the Gaussian hypergeometric function ${}_2F_1$. We have verified \eqref{eq:ptx_model0-1},
		which concludes the proof of \eqref{eq:ptx_model0} when $ x\geq 0 $. The case $ x<0 $ follows from similar considerations.
	\end{proof}
	We note that   discrete distributions similar to $ X $   have been considered before. In particular, \cite{Zhang2011}  considers an analogue of our random variable $ X $ and computes $ p_0^t $ (termed ``return probability $ p_0(t) $'' in the paper). $ X $ is also closely related to the \emph{inverse trinomial distribution} \cite{shimizu1991,Aoyama05}, defined via a random walk on the line. Nevertheless, we are not aware of prior work establishing the PMF \eqref{eq:ptx_model0} of $ X $.

	The return probability for the random walk, $p_0^t$, is the probability of error-free electron transfer in the context of our benchmarking application, and hence can also be interpreted as transfer fidelity.  As long as the contribution of the return paths is negligible, $p_0^t$ decays exponentially, but for larger $t$ the exponential decay is modified. Below we derive an explicit asymptotics that characterizes both sides of this crossover.
	\begin{claim}
		\[
		p_0^{t} \approx
		\begin{cases}
		P_0^{t} \, , & t \ll (P_{+} {P_-})^{-1/2} \\
		\frac{\left ( P_0 + 2 \sqrt{P_{+} P_{-}}\right)^{1/2+t}}{
			(4 \pi \, t)^{1/2} (P_{+} \, P_{-})^{1/4} } \, , & t \gg (P_{+} {P_-})^{-1/2} 
		\end{cases} .
		\]
	\end{claim}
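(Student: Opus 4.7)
The plan is to derive a compact integral representation of $p_0^t$ and then extract both asymptotic regimes from it. From the previous claim specialized to $x=0$, we have
\[
p_0^t \;=\; \sum_{s=0}^{\lfloor t/2 \rfloor} \binom{t}{2s}\binom{2s}{s}\, P_0^{\,t-2s}(P_{+} P_{-})^s.
\]
Inserting the standard integral representation $\binom{2s}{s} = (2\pi)^{-1}\int_{-\pi}^{\pi}(2\cos\phi)^{2s}\,d\phi$, swapping sum and integral, applying $\sum_s \binom{t}{2s} a^{t-2s} b^{2s} = \tfrac12[(a+b)^t + (a-b)^t]$ with $a = P_0$ and $b = 2\sqrt{P_+P_-}\cos\phi$, and using the $\phi\mapsto\pi-\phi$ symmetry to identify the two resulting integrals yields
\[
p_0^t \;=\; \frac{1}{2\pi}\int_{-\pi}^{\pi}\bigl(P_0 + 2\sqrt{P_{+}P_{-}}\,\cos\phi\bigr)^{\!t}\, d\phi.
\]
(Equivalently, this is the Fourier inversion $p_0^t=(2\pi)^{-1}\!\int \chi_X(\theta)\,d\theta$ of the characteristic function $\chi_X(\theta)=(P_- e^{-i\theta}+P_0+P_+ e^{i\theta})^t$ after the saddle-point contour shift $\theta\mapsto\theta+i\tfrac12\log(P_+/P_-)$.) This single formula drives both halves of the claim.

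For the short-sequence regime $t\ll (P_{+}P_{-})^{-1/2}$, I expand the integrand in powers of $\cos\phi$; only even powers survive the $\phi$ integration. The $s=0$ term contributes $P_0^t$, while the first correction is
\[
\binom{t}{2}\,(2\sqrt{P_{+}P_{-}})^{2}\,P_0^{\,t-2}\cdot\tfrac{1}{2\pi}\!\int_{-\pi}^{\pi}\!\cos^2\phi\,d\phi \;=\; 2\binom{t}{2}P_{+}P_{-}\,P_0^{\,t-2},
\]
of relative size $O(t^2 P_{+}P_{-})$. Subsequent corrections are suppressed by further powers of $t^2 P_{+}P_{-}$, so $p_0^t \approx P_0^t$ whenever $t^2 P_{+}P_{-}\ll 1$.

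For the long-sequence regime $t\gg (P_{+}P_{-})^{-1/2}$, the integrand concentrates near $\phi=0$ and I apply Laplace's method. Writing $g(\phi):=P_0+2\sqrt{P_{+}P_{-}}\cos\phi$, so that $g(0)=P_0+2\sqrt{P_{+}P_{-}}$ and $\log g(\phi)=\log g(0)-\tfrac{\sqrt{P_{+}P_{-}}}{g(0)}\phi^2+O(\phi^4)$, extending the resulting Gaussian integral to the real line gives
\[
p_0^t \;\approx\; \frac{g(0)^t}{2\pi}\sqrt{\frac{\pi\,g(0)}{t\sqrt{P_{+}P_{-}}}} \;=\; \frac{(P_0+2\sqrt{P_{+}P_{-}})^{t+1/2}}{(4\pi t)^{1/2}(P_{+}P_{-})^{1/4}},
\]
which is exactly the stated asymptotic. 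The two regimes cross over precisely at $t\sim (P_{+}P_{-})^{-1/2}$, matching the conditions in the claim.

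The only genuine technical point is bookkeeping for the Laplace step: one needs the effective width $\sqrt{g(0)/(t\sqrt{P_{+}P_{-}})}$ to be small compared to $\pi$ (which is exactly the hypothesis $t\sqrt{P_{+}P_{-}}\gg 1$), and one needs $g(\phi)/g(0)<1$ away from $\phi=0$, which is immediate from strict concavity of $\cos\phi$ at the origin so that the tail contribution is exponentially suppressed in $t$. No deeper analytic input (for instance, DLMF asymptotics of $ {}_2F_1$ with large parameters) is required; the integral representation and elementary Laplace-method estimates suffice.
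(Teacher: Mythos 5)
Your proof is correct, and it takes a genuinely different route from the paper's. The paper works entirely in special-function language: it specializes the PMF to $x=0$ as $p_0^t = P_0^{\,t}\,{}_2F_1\bigl(-\tfrac{t}{2},\tfrac{1-t}{2};1;4P_{+}P_{-}/P_0^2\bigr)$, applies the quadratic transformation DLMF 15.8.13 to rewrite this as $\bigl(P_0+2\sqrt{P_{+}P_{-}}\bigr)^t\,{}_2F_1(-t,\tfrac12;1;\zeta)$ with $\zeta = 4\sqrt{P_{+}P_{-}}/(P_0+2\sqrt{P_{+}P_{-}})$, and then invokes Erd\'elyi's large-parameter asymptotics ${}_2F_1(-t,\tfrac12;1;\zeta)\sim {}_1F_1(\tfrac12;1;-t\zeta)\sim(\pi t\zeta)^{-1/2}$ for the long-time regime. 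Your representation $p_0^t=(2\pi)^{-1}\int_{-\pi}^{\pi}\bigl(P_0+2\sqrt{P_{+}P_{-}}\cos\phi\bigr)^t\,d\phi$ --- obtained via the central-binomial integral or, equivalently, by tilting the Fourier-inversion contour --- carries exactly the same information (writing the integrand as $g(0)^t\bigl(1-\zeta\sin^2(\phi/2)\bigr)^t$ recovers the classical integral representation of that very ${}_2F_1$), but it lets you replace the two cited special-function facts with a term-by-term expansion for short times and a routine Laplace estimate for long times. What your approach buys is self-containedness and an explicit error budget: relative corrections $O(t^2P_{+}P_{-})$ in one regime, $O\bigl(1/(t\sqrt{P_{+}P_{-}})\bigr)$ in the other, plus exponentially suppressed tails; what the paper's buys is brevity and a pointer to general machinery that also handles the $x\neq 0$ analogues. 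Two cosmetic remarks: your short-time relative error is really $O(t^2P_{+}P_{-}/P_0^2)$, which is harmless since $P_0\approx 1$ in the high-fidelity regime both proofs implicitly assume; and the global bound $g(\phi)<g(0)$ for $\phi\not\equiv 0$ follows simply from $\cos\phi<1$ on $(0,2\pi)$ rather than from concavity at the origin.
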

	\begin{proof}
		\newcommand{\zbar}{\zeta}
		By \eqref{eq:ptx_model0} we have
		\[ 
		p_0^t =  (\probO)^{t}  {}_2F_1    \left(\frac{-t}{2},\frac{-t+1}{2};  1; z \right), 
		\]
		where we denote $ z =4\probP\probQ / \probO^2   $.
		We start by observing that by a  quadratic transformation \cite[\href{https://dlmf.nist.gov/15.8.E13}{Eq.~15.8.13}]{DLMF}
		we have
		\begin{equation}\label{eq:claim2_eq1}
		{}_2F_1 \lr{  \frac{-t}{2},  \frac{-t+1}{2}  ;  1;  z} = {}_2F_1\lr{-t, 0.5; 1;\zbar} \cdot \lr{1 - 0.5 \zbar}^{-t},
		\end{equation}
		where the variable $ \zbar $ is defined by $ \frac{\zbar }{2- \zbar} = \sqrt z $, i.e.,
		\[ 
		\zbar =  \frac{2\sqrt z}{1 + \sqrt z}=  \frac{4\sqrt{P_{+} P_{-}}}{P_0 + 2 \sqrt{P_{+} P_{-}}}  
		\quad\text{and}\quad
		(1 - 0.5 \zbar)^{-1}  =  \frac{P_0 + 2 \sqrt{P_{+} P_{-}}}  {P_0}.
		\] 
		Using the equality \eqref{eq:claim2_eq1} we arrive at
		\begin{equation}\label{eq:claim2_eq2}
		p_0^t =  (P_0 + 2 \sqrt{P_{+} P_{-}})^{t} \,   {}_2F_1    \left( -t, 0.5;  1; \zbar \right).
		\end{equation}
		Since the hypergeometric function on the right hand side of \eqref{eq:claim2_eq2} is a degree-$t $ polynomial in the variable $ \zbar \ll 1 $, for small values of $ t $ we can approximate
		\[ 
		P_0 + 2 \sqrt{P_{+} P_{-}} \approx P_0
		\quad\text{and}\quad 
		{}_2F_1    \left( -t, 0.5;  1; \zbar \right) \approx 1,
		\]
		leading to the first part of the claim.

		Now we consider \eqref{eq:claim2_eq2} with fixed $ \zbar $ when $ t \to +\infty $. We apply an asymptotic expansion of the hypergeometric function in case of a large argument due to Erd\'elyi \cite[p.~77, Eq.~15]{Erdelyi53}, which exploits the relation between the (Gaussian) hypergeometric function $ {}_2 F_1 $ and the confluent hypergeometric function $ {}_1 F_1 $:
		\[ 
		{}_2 F_1 \lr{ -t, 0.5; 1; \zbar} \sim _{1}F_1(0.5; 1;  -t \zbar)  
		\sim \frac{\Gamma (1)  (  t\zbar )^{-0.5} }{\Gamma(0.5)}  \lr{ 1  + O\lr{ \lrv {t \zbar }^{-1}  }} 
		\sim  \frac{1}{   \sqrt {\pi t \zbar }} .
		\]
		Combining this with \eqref{eq:claim2_eq2} and substituting $ 	\zbar =   \frac{4\sqrt{P_{+} P_{-}}}{P_0 + 2 \sqrt{P_{+} P_{-}}}   $  gives us the second part of the claim.
	\end{proof}

	Finally, consider $ N $ independent observations of the random variable $ X$, i.e., i.i.d. random variables $ X_1,  \ldots , X_N \sim X $. Let $ Z_x = \lrv{\lrb{ j  \in \{1,\ldots, N\}  \; : \;   X_j = x }} $ be the number of times the value $ x \in \{-t,\ldots, t\} $ appears among these $ N $ observations. Then the random variable 
	\[ 
	\mb Z_{N,t} = (Z_{-t}, Z_{-t+1}, \ldots, Z_0, Z_1, \ldots, Z_t) 
	\] 
	follows  a multinomial distribution with $ N $ trials and $ 2 \, t+1 $ categories, labeled from $ -t $ to $ t $, and respective probabilities  $  {p_{x}^t}$.  When there is no ambiguity, this notation is simplified to $ \mb Z $.
	The probability to observe a particular vector $ \mb z \in \mbb N_0^{2t+1} $, $ \sum_{x=-t}^t z_x = N $ (where $ \mbb N_0 $ stands for the set of nonnegative integers) is
	\begin{equation}\label{model0_Z_pmf}
	\Pr(\mb Z = \mb z) =  N! \prod_{x=-n}^{n} \frac{ \lr{p_{x}^t}^{z_x} }{z_x!}.
	\end{equation}
	
	\section{Assessing \texorpdfstring{$ \probPQ $}{P+-} values from the experimental data}
	
	\subsection{Estimation of step-wise probabilities \texorpdfstring{$ \probPQ^{t} $}{P+-} by deconvolution}\label{sec:deconvolution-of--probpq-values}
	Under the assumption that the $\probPQ^t$-values  are independent of the position $x$ of the random walker, they can be extracted by deconvolution of $p_x^t$ and $p_x^{t+1}$.
	For that let us expand the model used so far and consider a random walk on the set of integers which at time $t$ performs transition $ x \mapsto x+j $ with probability $ P_j^t $, $ x,j \in \mbb Z $. Here $ P_j^t \in (0,1)  $ for all $ j $ and $ \sum_{j\in \mbb Z}P_j^t  =1  $.
	
	The experiment yields two vectors from $ \mbb R^{2m+1} $, $ m\in \mbb N $, {representing} the distributions  $ \mb p^t = \lr{ p_{-m}^t, \ldots,    p_{-1}^t, p_0^t, p_1^t, \ldots, p_m^t} $ and $ \mb p^{t+1} = \lr{ p_{-m}^{t+1}, \ldots,    p_{-1}^{t+1}, p_0^{t+1}, p_1^{t+1}, \ldots, p_m^{t+1}} $. We shall assume $ P_j^t = 0 $ for all $ j\in \mbb Z $ s.t. $ \lrv j \geq m $.
	The distribution  $  \mb p^{t+1}  $ represents the position of the random walker after $ t+1 $ steps and satisfies 
	\[ 
	p_x^{t+1}=\sum_{j \in \mbb Z} P_{j}^t p_{x-j}^t,
	\] 
	i.e., $  \mb p^{t+1}  =\mathbf{P}^t\ast\mathbf{p}^{t}$  is the discrete convolution of $ \mb P^t =\lr{P_{-m}^t, \ldots,P_{-1}^t,P_{0}^t,P_{1}^t,\ldots,P_{m}^t}$ and $ \mb p^t $. Therefore $ \mb P^t $ can be extracted by discrete deconvolution, which is performed as follows.
	
	Let $\mathfrak{p}^{t+1}$, $\mathfrak{p}^t$ and $\mathfrak{P}^t $ stand for the Fourier transform of $\mb p^{t+1}$, $\mb p^t$ and $\mb P^t $, respectively, then
	$$ \mathfrak{p}^{t+1}={\mathfrak{P}^t} \cdot\mathfrak{p}^t ,
	\quad \text{i.e.,} \quad  \mathfrak{P}^t_x = \frac{\mathfrak{p}^{t+1}_x}{\mathfrak{p}^t_x}
	\quad \text{for all } x.
	$$ 
	The vector $\mathfrak{p}^t$ is calculated from $p^t$ as
	\[
	\mathfrak{p}_x^t=\sum_{n=-m}^{m}p_n^t\exp{\left( -\imagI x\beta_n\right) }, \quad \text{where }\imagI=\sqrt{-1}\text{ and }\beta_n=\frac{2\pi n}{2m+1},
	\]
	similarly for $\mathfrak{p}^{t+1}$. 
	Now we can the get $\mathbf{P}^t$ by applying the inverse discrete Fourier transform to $\mathfrak{P}^t$:
	\[
	P_j^t
	=
	\frac{1}{2m+1}\sum_{n=-m}^{m}\mathfrak{P}^t_n\exp{\left( \imagI n\beta_j\right) }.
	\]
	Further details on how the deconvolution is performed and the uncertainty propagates can be found in \cite{Eichstaedt2016}.
	Now that we have extracted $\mathbf{P}^t$, we find for the experiment described in the main text, that $P_{|j|>1}^t\approx 0$ which allows us to approximate $\probP^t=P^t_{+1}$, and $\probQ^t=P^t_{-1}$.

	\subsection{Comparison between measured and predicted  \texorpdfstring{$ P_{\pm} $}{P+-}}
	\label{sec:compare_pumps}
	
	The characterization of the single electron pumps gives us their transport statistic $q_m^{(i)}$, which is the probability of pump $i \in\left\lbrace1,2\right\rbrace$ transporting $m \in \mathbb{Z}$ electrons. Assuming independence of simultaneous pump operation we can calculate the probability $P_x$ that charge on the island increases by $x \in \mathbb{Z}$ electrons (here $P_{\pm1}$ is equivalent to $\probPQ$ in \eqref{eq:px} in the  main text) as
	\begin{equation}\label{eq:Px_independent_pumps}
	P_x(\text{predicted})=\sum\limits_{m} q_{m+x}^{(1)}\cdot q_{m}^{(2)} \, .
	\end{equation}
	
	Table \ref{tab:node_dist} provides an example for agreement between measured and predicted values of \probPQ\ for non-interacting pumps.

	\begin{table} 
		\centering
		
		\begin{tabular}{@{}cS[table-column-width=12em]S[table-column-width=12em]}
			\toprule
			& \multicolumn{2}{c}{single pumps}\\
			\cline{2-3} \rule{0pt}{12pt}
			{$m$} & {$q_m^{(1)}$} & {$q_m^{(2)}$} \\
			\colrule
			0 & 0.00021 \pm 0.00005 &  3.6 \pm 2.8 e-05\\
			1 & 0.99982 \pm 0.00006 & 0.999975 \pm 0.000019\\
			2 &  0.0 \pm 2.0 e-05 &  0.0 \pm 1.2 e-05\\
			\botrule
		\end{tabular}
		\begin{tabular}{|cS[table-column-width=12em]S[table-column-width=12em]@{}}
			\toprule
			& \multicolumn{2}{c}{whole device}\\
			\cline{2-3} \rule{0pt}{12pt}
			{$x$} & {$P_x$ (measured)} & {$P_x$ (predicted)} \\
			\colrule
			\num{-1} &0.00012 \pm 0.00004 & 0.00021 \pm 0.00005\\
			\num{0} &0.99978 \pm 0.00005 & 0.99980 \pm 0.00006\\
			\num{1} & 0.0 \pm 2.1 e-05 &  3.6 \pm 3.4 e-05\\
			\botrule
		\end{tabular}
		
		\caption{Comparison between measured and predicted values of $P_x$, showing good agreement.}
		\label{tab:node_dist}
	\end{table}

	For the measurement in Table \ref{tab:stop_ramp} the waveform of the pump drive was changed from a low-frequency sinusoidal to a sharp voltage transient. Here we see a strong disagreement between the prediction of single pump characterization and the measurement of the \probPQ -values. This disagreement is caused by a strong shift of the operation point which occurs as soon as the pumps are operated simultaneously, indicating a strong correlation between the pumps.
	
	\begin{table}
		\centering
		\begin{tabular}{@{}cS[table-column-width=12em]S[table-column-width=12em]}
			\toprule
			& \multicolumn{2}{c}{single pumps}\\
			\cline{2-3} \rule{0pt}{12pt}
			{$m$} & {$q_m^{(1)}$} & {$q_m^{(2)}$} \\
			\colrule
			
			0 & 0.00012 \pm 0.00016 & 0.00000 \pm 0.00014\\
			1 & 0.99974 \pm 0.00019 & 1.00000 \pm 0.00014\\
			2 & 0.00012 \pm 0.00017 & 0.00000 \pm 0.00014\\
			\botrule
		\end{tabular}
		\begin{tabular}{|cS[table-column-width=12em]S[table-column-width=12em]@{}}
			\toprule
			& \multicolumn{2}{c}{whole device}\\
			\cline{2-3} \rule{0pt}{12pt}
			{$x$} & {$P_x$ (measured)} & {$P_x$ (predicted)} \\
			\colrule
			-1 & 0.239 \pm 0.006 & 0.00012 \pm 0.00021\\
			0 & 0.776 \pm 0.005 & 0.99974 \pm 0.00024\\
			1 & 0.00012 \pm 0.00016 & 0.00012 \pm 0.00022\\
			\botrule
		\end{tabular}
		
		\caption{Disagreement between measured and predicted $P_x$-values for sharp-transient waveform.}
		\label{tab:stop_ramp}
	\end{table}

	\section{Model consistency testing}\label{sec:est_conf_region}
	{The experimental data} consist of   observations (actually, rebinned observations as described in Supplementary Note \ref{sec:conf-region})  of random variables $ \mb Z_{N_1, t_1} $,  $ \mb Z_{N_2, t_2} $,  \ldots ,   $ \mb Z_{N_L, t_L} $, for several different pairs $ (N_1, t_1) $, \ldots, $ (N_L,t_L) $, which, according to the model outlined in Supplementary Note \ref{sec:basic_model}, all share the same step probabilities $ (\probQ, \probP )$.  
	
	We consider the problem of determining if there is a parameter $ \probPQ  $ such that the experimental data do not contradict  the model, at the fixed significance level. More generally, we are interested in extracting  a region in the parameter space such that  the experimental data do not contradict  the model for each choice of the parameter from the region; for brevity, we will refer to this region as \emph{consistency region}. It should be stressed that this approach is different from \emph{parameter estimation problem}, in that here we are interested in parameter values which cannot be statistically rejected as incompatible with the data, whereas the parameter estimation techniques deal with estimating the values of the parameters in some fashion, e.g., by finding the values of parameters under which the experimental data are most probable under the assumed model.
	
	The problem of testing  consistency of the model with a specific parameter value is twofold: since the data correspond to several pairs of $ (N,t) $, with different parameters $ N,t $ but the same  step probabilities $ (\probQ, \probP )$, there are two questions to be asked:
	\begin{enumerate}
		\item Are the data for the particular  value of $ (N,t) $ consistent with the model for some parameter $ \probPQ  $?
		\item Are all the data  consistent with the model for some fixed value of $ \probPQ  $?
	\end{enumerate}
	We start by testing consistency with the model  in case of an observation of  $ \mb Z_{N,t} $ for a single pair $ (N,t) $.
	
	\subsection{Fisher's  significance testing}\label{sec:fisher}
	
	Let  $ \mb z_0 $ be an {observation} of  the  random variable $ \mb Z := \mb Z_{N,t} $, with prescribed parameters $ t,N $ but unknown probabilities  $ \probQ, \probO, \probP $.
	
	We employ Fisher's  significance testing framework in order to extract the consistency regions for the parameter  $ {\boldsymbol\theta} = (\probQ, \probO, \probP )$. In its simplest form, a Fisherian test formulates \cite{Christensen2005} a single hypothesis, the null hypothesis $ H_0 $, which specifies the null distribution (i.e., in our case $ H_0 :  {\boldsymbol\theta} = {\boldsymbol\theta}^* $ for some fixed $ {\boldsymbol\theta}^* $); then a certain test statistic $ T $ is computed from the observation $ \mb z _0$, leading to a value $ T(\mb z_0) $. The $ p $-value of the test is the tail probability of $ T(\mb Z) $ under $ H_0 $.  In our setting,  the test statistic will be non-negative and smaller values will indicate stronger disagreement with the null hypothesis. Then the $ p $-value of the test is
	\[ 
	p(\mb z_0)= \sum_{\substack{\mb z :  \\  T(\mb z) \leq T(\mb z_0)}} \Pr (\mb Z=\mb z),
	\]
	where $ \Pr(\mb Z = \mb z) $ stands for the probability of  the event $ \mb Z = \mb z $ under the null hypothesis and the sum is over all those values $\mb z$ of the random vector $ \mb Z $ that satisfy $T(\mb z) \leq T(\mb z_0)$
	In the Fisher's  significance testing framework  the $ p $-value is interpreted as ``a measure of extent to which the  data do not contradict the model'' \cite[p.122]{Christensen2005}. Therefore Fisher's  significance testing allows to check if $H_0$ must be rejected (at the chosen significance level) for the particular value $\boldsymbol \theta^*$; next, we shall employ Fisher's  significance testing to extract the region of those $\boldsymbol \theta$ values {for which the respective} $H_0$ cannot be rejected, see Supplementary Note \ref{sec:conf-region}.

	The problem of testing  whether  the parameters of a multinomial distribution equal specified values has been well-investigated \cite{Conover1972,Smith1981,Cressie1984,Jann2008}. The common approaches (such as Pearson's $ \chi^2 $ test, $ G^2 $ test or power-divergence test \cite{Cressie1984} which subsumes the former tests) are asymptotic tests which  
	can be highly biased. This is due to the fact that under the null  hypothesis the random variable $ X $ has vanishingly small tail probabilities  (and the  actual observed samples $ \mb z $ have zero observed counts in the respective positions).  This phenomenon makes the asymptotic tests ill-suited for the actual data.

	An alternative to the aforementioned tests is the exact multinomial test \cite{Cressie1984}, which enumerates all possible multinomial outcomes; its test statistic $ T $ is the probability of obtaining the particular outcome   under the null hypothesis. Then
	the $ p $-value of the test is
	\[ 
	\sum_{\substack{\mb z :  \\  \Pr(\mb Z=\mb z) \leq \Pr(\mb Z=\mb z_0)}} \Pr (\mb Z=\mb z).
	\]
	
	However, the exhaustive enumeration  quickly becomes computationally intractable as $ N $ grows.
	We instead apply a Monte Carlo test (proposed in \cite{Barnard1963}, see also \cite{Hope1968, Besag1992,Jann2008}),  which can be seen as an extension of the exact multinomial test.  In the  Monte Carlo hypothesis testing  procedure, a large number (say, $ \Nsim $)  samples from the multinomial distribution under the null hypothesis are simulated; for each sample $ \mb z $ the test statistic $ \Pr (\mb z)  $ is calculated (i.e., the probability to draw $ \mb z $ from the distribution $ \mb Z $ under the null hypothesis). Let $ k $  be the number of samples for which the test statistic is at least as extreme as for the  observed vector $ \mb z_0 $ (i.e., the number of samples $ \mb z $ for which $ \Pr(\mb z) \leq \Pr (\mb z_0) $). Then the $ p $-value of the test is $ (k+1)/(\Nsim + 1) $.

	\subsection{Consistency regions}\label{sec:conf-region}
	Since $ \probO = 1-\probP-\probQ $,  the Monte Carlo tests are applied to extract $ 95\% $  consistency region for   the pair $ (\probQ, \probP )$. This region is defined as the set  of all admissible $ (\probQ, \probP )$  values for which the $ p $-value obtained by  testing the hypothesis $ H_0 :  {\boldsymbol\theta} =(\probQ,  (1-\probQ- \probP),\probP) $ is at least $ 0.05 $.
	
	In practice, since the observed vector $ \mb z_0 $ has many zero entries (as $  N$ is too small to observe ``$ X=x $''  when $ \lrv x $ is large) and, since the experimental data is limited to small $ \lrv x $, the data are rebinned. Depending on the dynamical range of the detector and available computational resources, we consider a random variable $ \tilde {\mb Z}  = (\tilde Z_{-3}, \tilde Z_{-2}, \tilde Z_{-1}, \tilde Z_0, \tilde Z_1, \tilde Z_2, \tilde Z_3)  $ instead of the random variable $ \mb Z $  where 
	\begin{equation}\label{eq:rebinning}
	\tilde Z_{-3} = \sum_{x \leq -3} Z_x, \quad
	\tilde Z_{3} = \sum_{x \geq 3} Z_x,
	\text{ and }
	\tilde Z_x= Z_x ,\    \lrv{x} \leq 2,
	\end{equation}
	and perform the aforementioned  tests against an observation $ \tilde {\mb z}_0 $ of $ \tilde {\mb Z} $.
	Further on, this subtlety will be assumed implicitly, i.e., when talking of the random variable $\mb Z$ or its observation $\mb z_0$, the rebinned counterparts $\tilde{\mb Z}$ and  $\tilde{\mb z}_0$ are to be understood.
	
	\subsection{Combining  the  \texorpdfstring{$ p$}{p}-values}\label{sec:comb_p-val}
	The discussion above attempts to answer if the data are consistent with some $\probPQ$, for a particular value of $ (N,t) $; the challenge now is to combine the statistical tests done for all $ L $ pairs of $ (N,t) $. While for each fixed pair $ \lr{N,t} $ the $ 95\% $  consistency region  can be constructed from the observation of the respective $ \mb Z_{N,t} $, the goal   is to obtain a global measure of discrepancy between the data and the hypothesis $ H_0 :  {\boldsymbol\theta} =(\probQ,  (1-\probQ- \probP),\probP) $, taking into account the observations for all  pairs $ (N,t) $.
	
	This task can be viewed as the problem of combining several independent $ p $-values, which arises in meta-analysis \cite{Borenstein2009}. When testing a true point null hypothesis  and the test statistic is absolutely continuous, it can be shown that the $ p $-values under the null hypothesis are uniformly distributed in $ [0,1] $. This allows to apply, e.g., Fisher's method of testing {uniformity} \cite{Fisher1932} (for an overview of other ways to combine $ p $-values, see \cite[Appendix A]{Winkler2016}). In our case both the random variables $ \mb Z_{N,t} $  and the test statistic are discrete, thus  under the null hypothesis all $ p $-values obtained for each pair  $ (N,t) $   only   {approximate the uniform distribution}.    Fisher's method is used to approximately determine the combined $ p $-value, even though in case of sparse discrete distribution this approximation  may \cite{Mielke2004} yield conservative results. 
	
	{In practice}, due to the computational cost involved with computing the combined $ p $-value, this global consistency test is only performed for a single value of $ \boldsymbol \theta $. The value $ (\probP,\probQ)  =(\numlistDef{2.130664e-5;6.924426e-5})$ we performed the combined test on is the one under which the observed data are most probable, i.e.,  the maximum likelihood estimate, see Supplementary Note \ref{sec:MLE}. However, the combined $ p $-value $ \numlistDef{2.230876680796699e-06} $ means that $H_0  $ needs to be rejected; also visually (see Figure \ref{fig:fig2}c, triangles) it is clear that the distribution of $ p $-values is far from uniform. Hence one concludes that this model with fixed $ \probPQ $ for all pairs $ (N,t) $ is incompatible with the experimental data.
	
	\subsection{Maximum likelihood   estimation}\label{sec:MLE}
	The preceding discussion tries to determine if the data    contradict    the model, within the given level of significance. However, if one only tries to find the most suitable choice of parameters $ \probPQ $, a natural approach is to maximize the likelihood function, i.e., (in case of a single observation for a single pair $ (N,t) $) maximize the expression in \eqref{model0_Z_pmf}, with $ z_x $ being the actual observed values, with respect to the unknown parameters  $ \probPQ $. The task is equivalent to maximizing the logarithm of the likelihood, 
	\[ 
	\ell_{N,t}(\boldsymbol\theta) = \ln \Gamma(N+1) +  \sum_{x=-n}^{n}  \lr{z_x \ln \lr{p_{x}^t(\boldsymbol \theta)}  - \ln \Gamma (z_x+1)},
	\]
	where $ p_{x}^t(\boldsymbol \theta) $ stands for the RHS in \eqref{eq:ptx_model0} and $ \ln \Gamma $ is the natural logarithm of the gamma function.

	Since the observations across the $ L $ different pairs $ (N_i,t_i) $ are assumed to be independent, the joint probability of observing the complete data is the product of individual probabilities for each separate $ (N_i,t_i) $, i.e., the global log-likelihood function to be maximized is
	\[ 
	\ell (\boldsymbol\theta)  =  \sum_{i=1}^L   \ell_{N_i,t_i}(\boldsymbol\theta).
	\] 
	Maximizing this function over the standard 2-simplex using the experimental data gives  the maximum likelihood estimate $ (\probP,\probQ)  =(\numlistDef{2.130664e-5;6.924426e-5})$.

	\section{Dirichlet distribution-based random-walk models}\label{sec:dirichlet}
	Further we consider the case when the step probabilities $ \probQ $, $ \probP $ are themselves random variables.
	We assume that $  (\probQ,  \probO,\probP)$   follows a Dirichlet distribution, which is   \cite{Ng2011} ``one of the key multivariate distributions for random vectors confined to the simplex''.    
	The Dirichlet distribution also becomes important when the observed data are superficially similar to the multinomial distribution but exhibit more variance than   the multinomial  distribution permits. As authors in \cite[p.199]{Ng2011} note, ``One possibility of this kind of extra variation is that the multinomial probabilities'' are not constant across the trials and the vector of probabilities can be interpreted as a random vector in the standard simplex; in this case the Dirichlet distribution is a convenient choice, resulting in a compound probability distribution, the Dirichlet-multinomial distribution \cite[Definition 6.1]{Ng2011}.

	The Dirichlet distribution on the standard 2-simplex  $ \Delta^2 $ with positive parameter vector $ \boldsymbol \alpha=(\alpha_0, \alpha_1,\alpha_2) $,  denoted by $ \Dir(\boldsymbol\alpha) $,  is a probability distribution with 
	\cite[Definition 2.1]{Ng2011} the density function  
	\begin{equation}\label{eq:Dir_PDF}
	f_{\boldsymbol\alpha}({\boldsymbol\theta} ) = \frac{\prod_{i=0}^{2}\Gamma(\alpha_i)}{\Gamma(\sum_{i=0}^{2}\alpha_i)}  \theta_0^{\alpha_0-1} \theta_1^{\alpha_1-1} \theta_2^{\alpha_2-1}, \quad {\boldsymbol\theta}=(\theta_0, \theta_1, \theta_2) \in \Delta^2.	
	\end{equation}
	The mean  value and the variance of $ \theta_i $, $ i=0,1,2 $,   are given by
	\[ 
	\Expect(\theta_i)  = \frac{\alpha_i}{\sum_{j=0}^2 \alpha_j }     = :\tilde \alpha_i ,
	\quad
	\Var(\theta_i)   =  \frac{\tilde \alpha_i (1- \tilde \alpha_i)}{1+ \sum_{j=0}^{2}\alpha_j  },
	\] 
	respectively,  i.e., the mean value of $ \theta_i $ is proportional to the parameter $ { \alpha_i} $, but   the variance of $ \theta_i $ decreases as   $ \sum_{i=0}^{2}\alpha_i $ is increased. This allows to employ  the Dirichlet distribution  to model the scattering of the vector  $ (\probQ, \probO, \probP ) \in \Delta^2$ around its mean value  with a single additional parameter  characterizing the magnitude of the scattering.

	We proceed by considering two extensions of the baseline model, one where the variable $ {\boldsymbol\theta}=(\probQ, \probO, \probP ) $ is chosen independently for each of the $ N $ separate random walks, and another where $ {\boldsymbol\theta} =(\probQ, \probO, \probP ) $  is the same for all $ N $ random walks (but another $ {\boldsymbol\theta}\sim \Dir({\boldsymbol\alpha}) $ is independently drawn if either $ N $ or $ t $ is changed).

	\subsection{Model 1 (fast fluctuator)}\label{sec:model-1} 
	\subsubsection{Model description} 
	Let $ {\boldsymbol\alpha} = (\alpha_{-1} , \alpha_0, \alpha_1) $ be a fixed vector of positive parameters.   For each pair $ (N,t) $ we consider the following process:
	\begin{itemize}
		\item repeat $ N $ times:
		\begin{itemize}
			\item choose a random vector  $ {\boldsymbol\theta} =(\probQ,  \probO,\probP) \sim  \Dir({\boldsymbol\alpha})  $ (independently each time);
			\item perform $ t $ steps of  the random walk  with the respective step probabilities $ (\probQ,  \probO,\probP)  $;
			\item observe the  position of the random walker $X  \in \{-t, -t+1, \ldots, t-1, t\} $;
		\end{itemize}
		\item  given the $ N $   observations   $ X_1,  \ldots , X_N  $, denote $ Z_x = \lrv{\lrb{ j  \in \{1,\ldots, N\}  \; : \;   X_j = x }} $ and define the random variable 
		\[ 
		\mb Z_{N,t} = (Z_{-t}, Z_{-t+1}, \ldots, Z_0, Z_1, \ldots, Z_t) .
		\]    
	\end{itemize} 
	This model corresponds to choosing step probabilities $ \probQ, \probP $ independently for each  repetition of a random walk of a fixed length $t$. This way the random variable $ \mb Z_{N,t} $  again has multinomial distribution, but now with modified (compared to the baseline model) probabilities incorporating the underlying Dirichlet distribution.
	
	\subsubsection{Probability mass function} 
	To describe this process more formally, for each pair $ (N,t) $ let $ \mb K = (K_{-1}, K_0, K_{+1}) $ have the Dirichlet-multinomial distribution with $ t>0 $ trials and parameter $ {\boldsymbol\alpha} = (\alpha_{-1} , \alpha_0, \alpha_1) $. Define a random variable $X= K_{+1} - K_{-1}  $, supported in the set $ \{-t,  -t+1, \ldots, t-1, t\} $; denote $ {p_{x}^t :=   \Pr(X=x)  } $ and define a multinomial   variable $ \mb Z_{N,t} $ with $ N $ trials, $ 2t+1 $ categories (from $ -t $ to $ t $) and the respective  probabilities $ p_x^t $, $ x  \in  \{-t,  -t+1,  \ldots, t-1, t\} $.

	Since $\mb K $ follows the Dirichlet-multinomial distribution, its PMF   (for a vector of nonnegative integers $ \mb k = (k_{-1}, k_0, k_1) $ s.t. $ k_{-1} + k_0 + k_1 = t $) satisfies 
	\cite[Eq. 35.152]{Johnson97}
	\[ 
	\Pr(\mb K = \mb k) 
	=
	\frac{t! \, \Gamma(\alpha_\bullet )}{\Gamma(t+\alpha_\bullet)}   \frac{\Gamma(k_{-1} + \alpha_{-1})\Gamma(k_{0} + \alpha_0)\Gamma(k_{1} + \alpha_1)}{k_{-1}!\, k_0!\,  k_1! \Gamma(\alpha_{-1})\Gamma(\alpha_0)\Gamma(\alpha_1)},
	\]
	where we denote $\alpha_\bullet := \sum_i \alpha_i$.
	Notice that if we keep the fractions $\theta_i := \frac{\alpha_i}{\alpha_\bullet}$ fixed, then in the limit $\alpha_\bullet\to \infty$ the random variable $\mb K$ becomes multinomially distributed, i.e.,
	\[
	\Pr(\mb K = \mb k) 
	\xrightarrow[\alpha_\bullet \to \infty]{}
	\frac
	{t!}
	{k_{-1}!\, k_0!\,  k_1!  }   
	\theta_{-1}^{k_{-1}}\theta_{0}^{k_{0}}\theta_{1}^{k_{1}}
	.
	\]
	This  follows easily from the gamma function property $ \Gamma (k+a ) \sim \Gamma (a)a^{k }$ as $a\to \infty$.

	Henceforth,
	\begin{align}
	p^t_x & =  \sum_{\substack{\mb k: \\   k_1 - k_{-1} = x\\ k_{-1} +k _0 + k_1 = t}} \Pr(\mb K = \mb k)  =\sum_{l = \max \{0,-x\}}^{ (t-x)/2}  \Pr \lr{ \mb K = (l, t-x-2l,  x+l)  } \notag  \\
	& =
	\frac{t! \, \Gamma(\alpha_\bullet)}{\Gamma(t+\alpha_\bullet)  \prod_i  \Gamma(\alpha_i)}  
	\, 
	\sum_{l = \max \{0,-x\}}^{ (t-x)/2}  \frac{ \Gamma(l + \alpha_{-1})  \Gamma(t-x-2l+ \alpha_{0})  \Gamma(x+l + \alpha_{1})  }{ l! (t-x-2l) ! (x+l)!  }. \label{eq:ptx_model1} 
	\end{align}
	Observe that keeping the fractions $ \frac{\alpha_i}{\alpha_\bullet}$ fixed and letting $\alpha_\bullet\to \infty$ makes the probabilities  $p^t_x$ given by \eqref{eq:ptx_model1} tend to the respective probabilities given by \eqref{eq:ptx_model0} (with $\probQ=\alpha_{-1}/\alpha_\bullet$ and $\probP=\alpha_1/\alpha_\bullet$).
	
	After $ N $ independent observations  the   multinomial vector $ \mb Z_{N,t} $ is obtained, 
	supported in the set  $ \lrb{ \mb z \in \mbb N_0^{2t+1}  : \sum_{x=-t}^t z_x = N}  $, with 
	\begin{equation}\label{eq:Z_PMF_model1}
	\Pr(\mb Z_{N,t}  = \mb z) =  N! \prod_{x=-n}^{n} \frac{ \lr{p_{x}^t}^{z_x} }{z_x!}, \quad  \mb z \in \mbb N_0^{2t+1}  ,\ \sum_{x=-t}^t z_x = N.
	\end{equation}
	
	The variable $ \mb Z_{N,t} $  still has the multinomial  distribution, as in the baseline model, and Eq.~\eqref{eq:Z_PMF_model1} is the same as \eqref{model0_Z_pmf} but with $p_{x}^{t}$ given by \eqref{eq:ptx_model1}.
	However, in contrast to the baseline model,  the vector $ \mb K $ has the Dirichlet-multinomial distribution instead of the multinomial distribution as before. 
	That, in turn, implies that the probabilities   $ p_x^t $ are not calculated from \eqref{eq:ptx_model0}, but   given by \eqref{eq:ptx_model1} instead. In effect, $ \mb Z_{N,t} $  is a multinomial distribution, but different probabilities associated with its categories, when compared to the baseline model.

	\subsubsection{Consistency testing} 
	Consistency of this model is tested similarly as in the baseline case:
	\begin{itemize}
		\item For each particular pair $ (N,t) $, we perform a Fisherian test of the hypothesis $ H_0 :  {\boldsymbol\alpha} = {\boldsymbol\alpha}^*  $ for some fixed $ {\boldsymbol\alpha}^* $, given  an observation $ \mb z_0 $ of $ \mb Z =\mb Z_{N,t}$. The test is again conducted in the Monte Carlo manner  as described previously,   by drawing $ \Nsim $ samples from the multinomial distribution under the null hypothesis and extracting the $ p $-value as $ (k+1)/(\Nsim + 1) $. Here $ k $ indicates the number of the simulated samples $ \mb z $ satisfying $ \Pr( \mb Z  = \mb z) \leq \Pr ( \mb Z  = \mb z_0) $. 
		\item Consistency of the model taking into account all $ L $ different pairs $ (N,t) $ is done by combining the $ L  $ obtained  $ p $-values, via Fisher's method of testing uniformity.
	\end{itemize}
	The value  $ {\boldsymbol\alpha}^* $ to be tested in the previous step is again   the maximum likelihood estimate, obtained by maximizing  the function
	\[ 
	\ell (\boldsymbol\alpha)  =  \sum_{i=1}^L   \ell_{N_i,t_i}(\boldsymbol\alpha),
	\]
	where
	\[ 
	\ell_{N,t}(\boldsymbol\alpha) = \ln \Gamma(N+1) +  \sum_{x=-n}^{n}  \lr{z_x \ln \lr{p_{x}^t(\boldsymbol \alpha)}  - \ln \Gamma (z_x+1)},
	\]
	and $ p_{x}^t(\boldsymbol \alpha) $ is given by the RHS of \eqref{eq:ptx_model1}. 
	Maximizing this function over the parameter space  using the experimental data gives  the maximum likelihood estimate $ \boldsymbol\alpha^*  =(\numlistDef{9.07719849e+01;1.30960421e+06;2.78484341e+01})$. However, the combined $ p $-value $ \numlistDef{2.0768436977413293e-06} $ 
	again
	indicates that $H_0  $ needs to be rejected; as it is seen in Figure \ref{fig:fig2}c (squares), the distribution of $ p $-values still remains far from uniform. Consequently, this model is also incompatible with the experimental data.
	
	\subsection{Model 2 (slow drift)} \label{sec:model-2}
	\subsubsection{Model description} 
	Let again $ {\boldsymbol\alpha} = (\alpha_{-1} , \alpha_0, \alpha_1) $ be a vector of positive parameters.  Now we consider the following process  for each  pair $ (N,t) $:
	\begin{itemize}
		\item  choose    a   random vector $ {\boldsymbol\theta} =(\probQ,  \probO,\probP) \sim  \Dir({\boldsymbol\alpha})  $; 
		\item repeat $ N $ times:
		\begin{itemize}
			\item perform $ t $ steps of  the random walk  with the respective step probabilities $ (\probQ,  \probO,\probP)  $;
			\item observe the  position of the random walker $X  \in \{-t, -t+1, \ldots, t-1, t\} $;
		\end{itemize}
		\item  given the $ N $   observations   $ X_1,  \ldots , X_N  $, denote $ Z_x = \lrv{\lrb{ j  \in \{1,\ldots, N\}  \; : \;   X_j = x }} $ and define the random variable 
		\[ 
		\mb Z_{N,t} = (Z_{-t}, Z_{-t+1}, \ldots, Z_0, Z_1, \ldots, Z_t) .
		\]   
	\end{itemize}
	This way, the vector $ {\boldsymbol\theta} =(\probQ,  \probO,\probP) \sim  \Dir({\boldsymbol\alpha})  $   is drawn independently across different pairs $ (N,t) $, yet  for each particular $ (N,t) $ it is fixed for all $ N $ random walks (the $ N $ random walks are assumed to be conditionally independent given $ \boldsymbol\theta $). The  resulting random variable $ \mb Z_{N,t} $ has a discrete compound distribution, akin to the Dirichlet-multinomial distribution; however, $ \mb Z_{N,t }$ is not multinomially distributed anymore.
	
	Technically, the key difference from the previous model is that all $ N $ random walks use the same (randomly drawn from $ \Dir({\boldsymbol\alpha}) $) vector $ {\boldsymbol\theta} $, therefore marginalization of $ {\boldsymbol\theta} $ happens only after forming the counts vector $ \mb Z_{N,t} $. In contrast, in the previous model the Dirichlet variable is marginalized after forming the vector $ \mb K $, resulting in  the Dirichlet-multinomial distribution for $ \mb K $ and a standard multinomial variable $ \mb Z_{N,t} $.
	
	\subsubsection{Probability mass function} 
	To characterize the model  more formally, for each pair $ (N,t) $ and a fixed vector $ {\boldsymbol\theta} =(\probQ,  \probO,\probP)    $ let $ p_x^t ({\boldsymbol\theta})$, $ \lrv x \leq t $, be defined as in the RHS of \eqref{eq:ptx_model0}.
	The random variable $ \mb Z_{N,t} $ is defined by compounding the multinomial distribution \eqref{model0_Z_pmf} with the Dirichlet distribution $ \Dir({\boldsymbol\alpha}) $, i.e.,  $ \mb Z_{N,t} $ is  supported in the set  $ \lrb{ \mb z \in \mbb N_0^{2t+1}  : \sum_{x=-t}^t z_x = N}  $ and its PMF is  obtained by marginalizing over the Dirichlet variable: for $ \mb z \in \mbb N_0^{2t+1} $  such that $ \sum_{x=-t}^t z_x = N $,
	\begin{equation}\label{eq:Z_PMF_model2}
	\Pr(\mb Z_{N,t}  = \mb z) 
	=  
	\frac{N!}{ \prod_{x=-n}^{n} z_x! }
	\int_{\Delta^2 }  \prod_{x=-n}^{n}  \lr{p_{x}^t({\boldsymbol\theta})}^{z_x}  f_{\boldsymbol\alpha}({\boldsymbol\theta} )  \, \mathrm{d}{\boldsymbol\theta},
	\end{equation}
	where the integration is over the standard 2-simplex  $ \Delta^2 $ and
	\begin{equation}\label{eq:Dir_PDF_SD}
	f_{\boldsymbol\alpha}({\boldsymbol\theta} ) 
	= 
	\frac{\prod_{i=-1}^1  \Gamma(\alpha_{i}) \theta_i^{\alpha_i  - 1}}{\Gamma(\alpha_\bullet)},
	\quad  
	\alpha_{\bullet} := \alpha_{-1}+\alpha_0 + \alpha_1,
	\end{equation}
	is the PDF of the Dirichlet distribution (adapted from \eqref{eq:Dir_PDF}). It is worth mentioning that since only the parameters $\probPQ,\probO$ are chosen from the Dirichlet distribution, instead of all $2t+1$ event probabilities associated to the multinomial distribution, the resulting compound distribution is not Dirichlet-multinomial.

	\subsubsection{Consistency testing} \label{sec:model-2-consistency}
	Given an observation $ \mb z_0 $ of $ \mb Z_{N,t} $, we again perform Monte Carlo test of the hypothesis $ H_0 :  {\boldsymbol\alpha} = {\boldsymbol\alpha}^*  $, for some fixed $ {\boldsymbol\alpha}^* $.
	However, now the probability $ \Pr(\mb Z_{N,t}=\mb z) $ has the complicated analytical form \eqref{eq:Z_PMF_model2}, which is difficult to compute numerically. Therefore also $ \Pr (\mb Z_{N,t} = \mb z) $ is estimated via Monte Carlo approximation, i.e.,  for the  particular parameter $ {\boldsymbol\alpha}^* $ and the  observed vector $ \mb z_0 $ we
	\begin{itemize}
		\item draw $ \Nsim $ independent samples $ {\boldsymbol\theta} \in \Delta^2 $ from $ \Dir({\boldsymbol\alpha}^*) $;
		\item for each of the sampled vectors $ {\boldsymbol\theta}=(\probQ,  \probO,\probP)  $ draw a sample $ \mb z $ from the multinomial distribution specified by \eqref{model0_Z_pmf} (where the probabilities $ p_x^t $ are computed using the sampled values $ \probQ, \probP $).
		\item  This way $ \Nsim  $ vectors $ \mb z_1 $, \ldots, $ \mb z_{\Nsim} $ are obtained, among them many may coincide. Suppose that    $ m $  distinct vectors $ \mb z_1' $, \ldots, $ \mb z_m' $ were obtained, with their respective frequencies $ k_1$, $ k_2 $, \ldots, $ k_m $,  $ \sum_i k_i = \Nsim $. We can assume that $ k_1 \leq k_2 \leq \ldots \leq k_m $.
		
		\item Suppose that  $ \mb z'_j $ coincides with the actual observation $ \mb z_0 $, and (provided that $ j<m $) $ k_j  < k_{j+1} $; then the $ p $-value of the test is declared $ (k+1)/(\Nsim + 1) $, where $ k: = k_1 + k_2 + \ldots + k_j $. In case $ \mb z_0 $ does not occur among the $ \Nsim  $  obtained vectors, the $ p $-value is declared 0. 
	\end{itemize}
	By employing the outlined procedure, we can perform  consistency  testing similarly as before. Consistency of the model taking into account all $ L $ different pairs $ (N,t) $ is done by combining the $ L  $ obtained  $ p $-values, via Fisher's method of testing uniformity.

	The value  $ {\boldsymbol\alpha}^* $ to be tested now is {found}   differently, compared to the previous models. This is due to the fact that  the probabilities $ \Pr( \mb Z  = \mb z) $ are estimated only approximately via Monte Carlo, which complicates maximizing the likelihood function.
	
	Instead, we fix the fractions $ \frac{\alpha_i}{\alpha_\bullet}$ to the best values of $ \probPQ $ found in the baseline model (Supplementary Note \ref{sec:MLE})  and optimize the parameter $\alpha_\bullet $, i.e., $ \boldsymbol \alpha $ is in form $ \alpha_\bullet\cdot\ (\probQ, \probO, \probP) $, where $ (\probP,\probQ)=(\numlistDef{2.130664e-5;6.924426e-5})$. The cost function associated with $ \alpha_\bullet $ is 
	\begin{align} \label{eq:distanceMeasure}
	C(\alpha_\bullet) = \sum_{i=1}^L \lrv{p_{(i)}  - \frac{i}{L}  },
	\end{align}
	where $ p_{(i)} $ stands for the $ i $-th smallest value among $ p_1, p_2, \ldots, p_L $, where the latter are the $ p $-values returned by the $ L $ tests of the hypothesis $ H_0 :  {\boldsymbol\alpha} =\alpha_\bullet\cdot\ (\probQ, \probO, \probP) $. In other words, the cost function measures the distance between the empirical distribution function of $ p $-values and the line corresponding to the cumulative distribution function corresponding to the uniform distribution.

	The {minimization}  of the cost function over $ \alpha_\bullet $ estimates  the optimal parameter to 
	$\boldsymbol\alpha^*=(\numlist[list-final-separator = {, },round-mode = places,round-precision = 2]{2.427195e3;3.504960e7;7.468530e2})$, with precision limited by numerical expense of Monte Carlo trials for $p_{(i)}$. The combined $ p $-value  equals $ \numlistDef{0.7116902505502581} $,  therefore  the null hypothesis  $ H_0 :  {\boldsymbol\alpha} = {\boldsymbol\alpha}^*  $ cannot be rejected. Also, as   it is seen in Figure \ref{fig:fig2}c (diamonds), the distribution of $ p $-values  visually conforms to the uniform. Henceforth, the experimental data do not contradict this model.
	
	\section{Excess noise simulations}\label{sec:noise-sim}

	The purpose of this Supplementary Note is to illustrate that our  excess noise models can
	be consistent with and give reasonable estimates of realistic  parametric variability one expects from an ensemble of TLFs (multi-timescale $1/f$-noise), or from a single but strong TLF (bimodal excess noise), despite a  generic  Dirichlet distribution and a single correlation timescale underpinning the ``fast-fluctuator'' and the ``slow-drift'' tests.
	
	First we define a model that simulates fluctuating environment of the real experiments, and then analyze the simulated error counts  using  the same statistical tests as applied in the main text and  described in Supplementary Notes \ref{sec:est_conf_region}, \ref{sec:model-1} and \ref{sec:model-2}.
	The results in Section \ref{sec:sim-results} below  illustrate the three main findings summarized in the main text: (i) a threshold in excess noise amplitude, above which the advanced statistical tests become useful; (ii) an example of simulated environment with results of statistical tests similar to experiment, and a comparison between estimated and actual measure of parametric variability $\Delta P_{\pm}$; (iii) a summary of single fluctuator behavior w.r.t.\ ``fast fluctuator'' and ``slow drift'' statistical test.

	\subsection{Noise simulation procedure}

	Our approach to simulate $1/f$ noise by an ensemble of two-level fluctuators (TLFs) follows the principles reviewed in {\cite{Paladino2014}}.  
	The timeline for the experiment is shown schematically in Figure~\ref{fig:TLF}. The differential error signal $x$ is measured after each burst (i.e., transfer sequence) of length $t_i$,  repeated consecutively $N_i$ times, for a fixed set of burst lengths $i=1 \, \ldots \, L$.

	\begin{figure}[H]
		\tikzset{dots/.style={draw,circle,fill=white,inner sep=1pt,minimum size=4pt}}
		\tikzset{braces/.style={blue,decorate,decoration={brace,amplitude=10pt,mirror},xshift=0.4pt,yshift=-2.4cm}}
		\tikzset{braceText/.style={black,midway,yshift=-10mm, text width=18mm,font=\scriptsize}}
		\tikzset{sepLine/.style={dashed, shorten >=-1cm, shorten <=-1cm}}
		\def\maxNum{20}
		\def\colA{Green!20}
		\def\colB{blue!20}
		\def\yMax{11}          
		\centering            
		\begin{tikzpicture}[scale=1.0] 
		\begin{axis}[
		axis x line=middle,	
		axis y line=none, 
		x label style={at={(axis cs:\maxNum+1.2,0.5)},anchor=south}, 
		xlabel = {\footnotesize$ \mathrm{t} $},
		height=50pt, 
		width=\axisdefaultwidth*1.8, 
		xmin=0,
		xmax=\maxNum+2,
		ymin=-\yMax, ymax=\yMax,
		ticks = none ]

		\addplot[\colA,line width=4mm,domain = 0:3.4,opacity=0.9] {x*0};
		\addplot[\colB,line width=4mm,domain = 3.4:6.2,opacity=0.9] {x*0};
		\addplot[\colA,line width=4mm,domain = 6.2:11.3,opacity=0.9] {x*0};
		\addplot[\colB,line width=4mm,domain = 11.3:11.7,opacity=0.9] {x*0};
		\addplot[\colA,line width=4mm,domain = 11.7:13.5,opacity=0.9] {x*0};
		\addplot[\colB,line width=4mm,domain = 13.5:20.5,opacity=0.9] {x*0};
		\pgfplotsinvokeforeach {1,...,\maxNum} {
			\node[dots]  at (axis cs: #1,0){} ;}
		
		\coordinate (p1A) at (axis cs: 1,0) {};
		\coordinate (p1A1) at (axis cs: 2,0) {};
		\coordinate (p1B) at (axis cs: 4,0) {};
		
		\coordinate (p2A) at (axis cs: 5,0) {};
		\coordinate (p2B) at (axis cs: 8,0) {};
		
		\coordinate (p3A) at (axis cs: 9,0) {};
		\coordinate (p3B) at (axis cs: 15,0) {};
		
		\coordinate (p4A) at (axis cs: 16,0) {};
		\coordinate (p4B) at (axis cs: 20,0) {};

		\end{axis}
		
		\draw [braces](p1A) -- (p1B) node[braceText] {$N_1$  bursts of length  $ t_1$};
		\draw [braces](p2A) -- (p2B) node[braceText]  {$N_2$ bursts of length $ t_2$};
		\draw [braces](p3A) -- (p3B) node[braceText]  {$N_3$ bursts of length $ t_3$};
		\draw [braces](p4A) -- (p4B) node[braceText]  {$N_4$ bursts of length $ t_4$}; 
		
		\draw[latex-latex]  ([yshift = 2mm]p1A) -- ([yshift = 2mm]p1A1)       node[black,midway,anchor=south]  {\footnotesize  $\tau_0$};

		\end{tikzpicture}
		\caption{Illustration of the experimental timeline. Circles denote a single instance of the random walk (one burst).
			The colors show the state  of a single fluctuator (green for $0$, blue for $1$) in a particular realization of the simulated disorder.}\label{fig:TLF}
	\end{figure}

	We explore the regime when the duration of one burst, $t_i \, f^{-1}$, is  much shorter than the detector-limited time interval $\tau_0$ between the repetitions
	(e.g., up to a few $\SI{}{\micro s}$ for the former and on the order of a $\SI{}{\milli s}$ for the latter for device A). Switching events in the environment can only contribute significantly  on time scales of $\tau_0$ and longer, and are neglected during the  short bursts. This means that  $P_{\pm}(\mathrm{t})$ remain fixed during a single instance of the random walk starting at an absolute time $\mathrm{t}$ (an integer multiple of $\tau_0$), and the particular value of $x$ is distributed according to Eq.~\eqref{eq:ptx_model0}   with a particular instance of $P_{\pm}(\mathrm{t})$. In the statistical models defined in Supplementary Note \ref{sec:model-1}--\ref{sec:model-2}, the values of $P_{\pm}$ are drawn from a Dirichlet distribution after the time $\tau_0$ (``fast fluctuator'') or after time $\tau_0 \, N_i$ (``slow drift''). Here, in contrast, we generate $P_{\pm}(\mathrm{t})$ from a continuous-time Markov process characterized by a set of 
	switching rates $\{ \Gamma_1, \Gamma_2 \,\ldots, \Gamma_M \}$ with $\Gamma_m \ge \tau_0^{-1}$.
	
	The value of $P_{\pm}(\mathrm{t})$ is determined by an average over $M$ two-level fluctuators,
	\begin{align} \label{eq:Pfluctuatoraverage}
	P_{\pm}(\mathrm{t}) = \frac{1}{M}\sum_{m=1}^{M} P_{\pm}^{(m)} \lrk{\xi_m(\mathrm{t})} \, ,
	\end{align}
	where $\xi_m(\mathrm{t})=0$ or $1$ is the state of the $m$-th fluctuator at time $\mathrm{t}$.
	Each fluctuator is characterized by two modes, {$P_{\pm}^{(m)}\lrk{0}$ and $P_{\pm}^{(m)}\lrk{1}$}, both drawn once for each full-timeline simulation from a globally fixed Dirichlet distribution. Parameters of the latter are the mean values  $\langle P_{\pm} \rangle$ and the total $\alpha_{\text{noise}}$ which determines the level of excess noise.
	The scatter parameter $\alpha_{\text{noise}}$ of the mode distribution is free; it controls the excess noise level.
	
	Switchings $\xi_m \to 1-\xi_m$ happen randomly with a rate $\Gamma_m$ (same in both directions) independently of other fluctuators. Hence each fluctuator is described by a $2 \times 2$ continuous-time Markov chain transition rate matrix $ 	\lr{\begin{smallmatrix}
		-\Gamma_m  & \Gamma_m\\
		\Gamma_m  & -\Gamma_m
		\end{smallmatrix}} $.
	An example of a timetrace of one fluctuator switching between two modes is shown by colorboxes in Figure~\ref{fig:TLF}.

	Two cases are considered:
	\begin{itemize}
		\item $1/f$-noise: $M=100$,
		$\Gamma_m = (10^{-8} \ldots 10^{0}) \, \tau_0^{-1}$,  with $\log \Gamma_m$ chosen randomly from a uniform distribution.
		\item Single TLF: $M=1$ and $\Gamma_1$ as an adjustable parameter.
	\end{itemize} 
	
	In a simulation, $N_{\text{tot}}$ values  of $x$  are available. For each burst length $t_i$, the corresponding vector $\mathbf{z}_i$ containing the number of  counts for each category of $x$ is compiled in the same way as in experiment. 
	For statistical evaluation of the simulated timetraces,
	we have binned  the simulated counts into five categories, instead of seven as in  Eq.~\eqref{eq:rebinning}.
	


	
	We fixed the set of  $L=42$ burst lengths with $1 \leq t_i \leq 100$ and samples sizes
	$\{N_1 \ldots N_{L} \}$ with $N_{\text{tot}}=\sum_{i=1}^{L} N_i = \numlistDef{38022642}$ and 
	$\numlistDef{5.00392e+5} \leq N_i \leq \numlistDef{1.398826e+06} $ to be the same as for experimental results on device A reported in Figure \ref{fig:fig2} in the main text. 
	
	Similarly, parameters of the Dirichlet distribution for the fluctuator modes  {are} chosen to be comparable to the experimental values, 
	$ (\langle\probP\rangle,\langle\probQ\rangle) =(\numlistDef{  2.1e-5; 7.0e-05 })$.

	We quantify the excess noise level by computing directly  the relative standard deviation (RSD) of the simulated time traces $P_{+}(\mathrm{t})$ and
	$P_{-}(\mathrm{t})$, and choosing the maximum:  $\text{RSD}=\max_{s=+,-} {\sigma_s}/{\mu_s}$, where
	\begin{align} \label{eq:mu-sigma}
	\mu_s :=  
	\frac{1 }{   N_{\text{tot}}} \sum_{n=1}^{N_{\text{tot}}}   P_s( n \, \tau_0)  ,
	\quad
	\sigma_s^2 :=
	\frac{1}{   N_{\text{tot}}} \sum_{n=1}^{N_{\text{tot}}}
	\left [ P_s( n \, \tau_0) - \mu_s \right ]^2.
	\end{align} 
	are respectively the  mean and the standard deviation of a particular simulated time trace; 
	the time along the data acquisition time-line $\mathrm{t}= n \, \tau_0$ is measured from $0$.

	\subsection{Noise simulation results\label{sec:sim-results}}

	\subsubsection{Detection of excess noise}
	We have run a number of simulations of $1/f$  noise with varying the noise strength parameter $\alpha_{\text{noise}}$ from  $10^{7.5}$ to $10^{3.6}$ and checked the statistical consistency of the accumulated error counts with the three simple models  discussed  in the main text: baseline model (no excess noise), ``fast fluctuator''  and ``slow drift''.

	\begin{figure}[htpb]
		\centering
		\includegraphics[width=\textwidth]{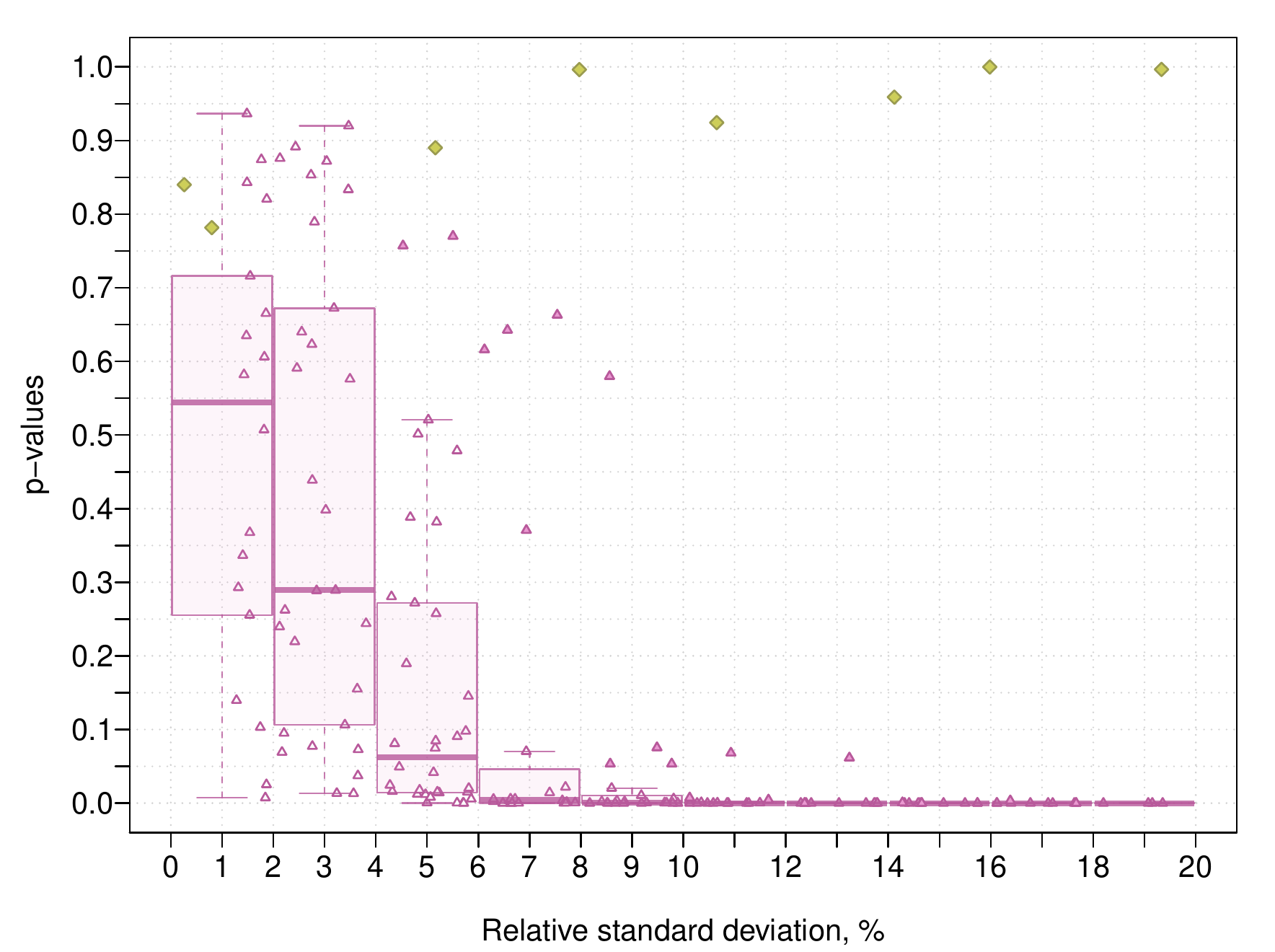}
		\caption[p-value vs RSD]{Results of consistency tests applied to simulated $1/f$ noise for the baseline (triangles) and the ``slow drift'' (diamonds) excess noise models. Horizontal axis: the relative  standard deviation of the simulated timetrace $P_{\pm}(\mathrm{t})$.
			Triangles: combined $p$-values returned by the baseline model. Also depicted the box-and-whisker plots of groups of these values, with filled triangles representing outliers in each group. The diamonds depict the combined $p$-values returned by the ``slow drift'' test.}
		\label{fig:p-vs-noise}
	\end{figure} 
	The results   are plotted in Figure \ref{fig:p-vs-noise}. Each symbol represents a simulation; its abscissa equals the RSD of the respective timetrace and its ordinate is the  {combined} $ p $-value obtained via  consistency testing of the respective  maximum likelihood estimate.  
	
	First we consider the baseline model results shown by triangles in Figure \ref{fig:p-vs-noise}. 
	For very low noise (essentially constant $P_{\pm}(\mathrm{t})$) the null hypothesis is expected to be true, and the $p$-values should be scattered approximately uniformly between $0$ and $1$,  while for stronger noise we expect small $p$ to become progressively more likely. 
	
	To illustrate this transition more clearly, we have added to Figure~\ref{fig:p-vs-noise}  box-and-whisker plots of the same dataset, by binning the baseline $p$-values by the corresponding RSD, i.e., first group with  
	RSD $<  2\%$, second with $2 \% \leq \text{RSD} < 4\%$ and so on, up to $18 \% \leq \text{RSD} < 20\%$.
	The box-and-whisker plots are computed using the standard way, i.e., if $ Q_1 $, $ Q_2 $, $ Q_3 $ are the quartiles of the group and $ \IQR= Q_3-Q_1  $ is the interquartile range, then the lower and upper limits of the  box are placed at $ Q_1 $ and $ Q_3 $, respectively, and the horizontal line within the box represents the median $ Q_2 $. Furthermore, the lower and upper fences are computed as  $ Q_1 - 1.5 \IQR  $ and $ Q_3 + 1.5\IQR $, respectively; if any value in the group is below the lower or above the upper fence, it is considered an outlier {and plotted with a filled symbol}. Finally, whiskers are drawn so that the upper whisker is located either at the upper fence or the maximal value in the group (whichever is smaller); similarly, the lower whisker is located either at the lower fence or the minimal value in the group (whichever is larger).
	
	As it can be observed in the Figure, the first two groups with RSD up to $4\%$ appear to be consistent  with the standard uniform distribution. However, for relative  standard deviation above 4\%  the distribution of the resulting $ p $-values quickly deteriorates and for $ \text{RSD} \geq 6 \% $   already the median of the obtained $ p $-values is well below the {conventional} $ p= 0.05$ threshold.
	{We see a clear evidence of a threshold in environmental noise strengths (quantified by RSD in the simulation timetrace): if the noise is too low, the data are consistent with the baseline model. For larger values of RSD, the $p$-values for the baseline model collapse to very low values, indicating strong rejection of the null hypothesis of no excess noise.}
	
	Next we evaluate to what extent the data collected from $1/f$ noise simulations are consistent with  single-timescale models
	discussed in the main text and  in Supplementary Note \ref{sec:model-1} and \ref{sec:model-2}.  We have found that for this particular type of noise the behavior of the ``fast fluctuator''' test offers only a marginal improvement over the baseline model, hence the corresponding $p$-values are not plotted. 
	
	To the contrary, the ``slow drift'' model is generally consistent with the simulated data even if the excess noise 
	level is high, provided that parameters of the  Dirichlet distribution in the ``slow drift model'' (see Eq.~\ref{eq:Dir_PDF_SD})  are chosen appropriately.
	
	In {Figure \ref{fig:p-vs-noise}}  the diamonds show the combined $p$-values for simulated noise data tested against the ``slow drift'' model using the Monte Carlo estimation procedure of Supplementary Note \ref{sec:model-2-consistency}.
	Here we have used $\max\limits_{s=+,-} \sigma_{s}$ computed from the simulated timetrace $P_{\pm}(\mathrm{t})$ to set \emph{a priori} the concentration parameter $\alpha_{\bullet}$ of the Dirichlet distribution used in the ``slow drift'' test. Despite the value of $\alpha_{\bullet}$  not being optimized (and hence, potentially biasing the test),
	the results indicate that  the ``slow drift'' model for the data is not rejected: 
	the $p$-values are high for the whole range of noise levels considered.

	\subsubsection{Example of statistical methodology applied to the simulated data}
	Here we illustrate in more detail application of statistical tests to a particular simulated timetrace of $1/f$ noise model.
	The simulation has used  $ \alpha_{\text{noise}} = 10^{5.2} $ and
	$ \lra{\probP}=\numlistDef{2.12e-05} $,  $ \lra{\probQ}=\numlistDef{6.9e-05} $.

	The estimated mean $\mu_s$ and the standard deviation $\sigma_s$ of the simulated $P_\pm(\mathrm{t})$ values are $\mu_+=\numlistDef{2.10860698039036e-05}, \mu_- = \numlistDef{6.93810249087584e-05}$ and   
	$\sigma_+=\numlistDef{7.71619720793375e-07}, \sigma_- = \numlistDef{1.36838349724268e-06}$, which gives the  relative standard deviation $\text{RSD}= \numlistDef{3.659381421}\%$. 
	The simulated $P_\pm(\mathrm{t})$ values are well approximated by the normal distribution with the respective parameters; see the histograms of $P_\pm(\mathrm{t})$ in Figure \ref{fig:artificial_data}b.

	Next we describe how the spectrum of the simulated noise was analyzed.
	Let $ R= 2^{18} $, $ K = \left\lfloor\frac{N_{\text{tot}}}{R}\right\rfloor=145  $.
	Given the signal $ P_s( n \, \tau_0) $, $ n=1,2,\ldots, N_{\text{tot}} $, where $ s \in \{-, +\} $, its power spectral density (PSD) is estimated by splitting the signal into $ K $ non-overlapping segments of length $ R $ (the signal components with $ n> KR $ are discarded) and computing   the periodogram  for each segment and averaging the results for the $ K $ segments.
	Each periodogram is found by computing the squared magnitude of the discrete Fourier transform and dividing the result by $ R $; moreover, the values at all frequencies except 0 and  0.5 are doubled since the one-sided periodogram is considered.

	In Figure \ref{fig:artificial_data}a we show the PSD estimates of the simulated timetrace which show the signature $1/f$ roll-off.
	
	\begin{figure}[ht]
		\centering
		\includegraphics[width=\textwidth]{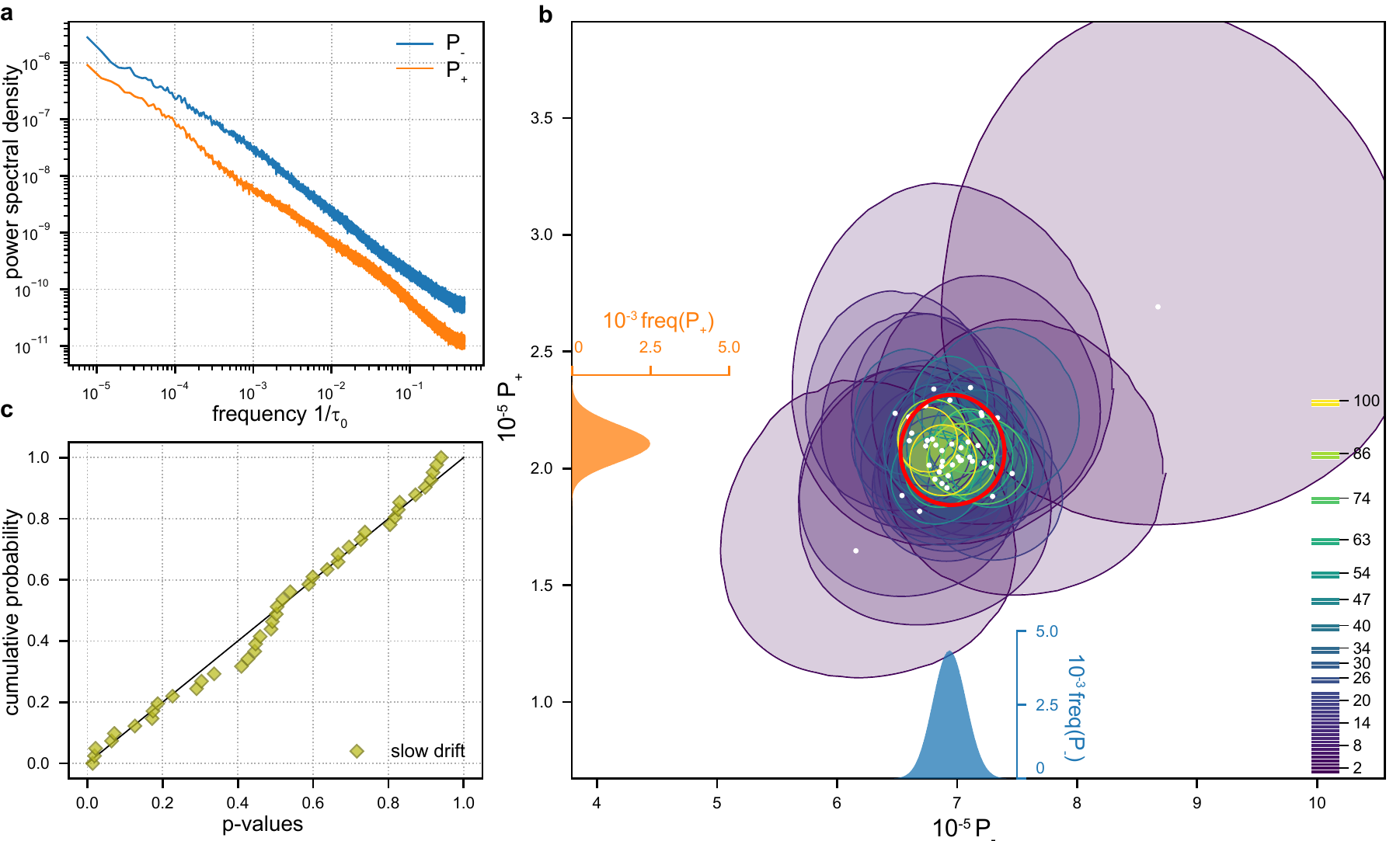}
		\caption{
			(a) The PSD estimates of the simulated timetrace.
			(b) White dots: maximum-likelihood estimates of $\probPQ$ in the baseline model for each sequence length separately. Also $p>0.05$ consistency regions are shown (color indicating the sequence length).  The   histograms of the simulated $P_\pm(\mathrm{t})$ values are depicted on the axes of the plot.
			(c) Empirical cumulative distribution of  $ p $-values for the slow drift model in comparison to the uniform distribution (black line).
		}
		\label{fig:artificial_data}
	\end{figure}

	Next, we analyze the simulated sample in the same way as the experimental data for device A,
	c.f.\ Figure \ref{fig:fig2} in the main text.

	For the baseline model, the maximum likelihood gives the estimate  $ \lr{\probP,\probQ}=\lr{\numlistDef{2.07161e-05;6.95532e-05} }$,
	which correlates well with the underlying $\mu_{\pm} \pm \sigma_{\pm}$.  Next, we employ Fisherian significance tests separately for each block of $N_i$ counts for a fixed burst length $t_i$ to define  consistency regions of $p$-value greater than $0.05$ in the parameter space $(P_{+}, P_{-})$ where the baseline model cannot be rejected at this significance level. These quasielliptic consistency regions are depicted in Figure~\ref{fig:artificial_data}b; again, their overlap is only partial. Fisher's meta-analysis method  yields the combined $p$-value $ \numlistDef{0.03914580} $, thus the baseline model is nominally rejected.
	
	The simulated data then were tested against the ``slow drift'' model with the optimized (as explained in Supplementary Note~\ref{sec:model-2-consistency}) parameters $ \boldsymbol{\alpha}^* = (\numlistDef{5.00782997e+03; 7.19934995e+07; 1.49155999e+03}) $. The cumulative distribution of the obtained $ p $-values for each sequence length  is shown in Figure~\ref{fig:artificial_data}c;
	it minimizes the empirical distance measure \eqref{eq:distanceMeasure} and is close to being uniform.
	Quantitatively,
	the combined $ p $-value  equals $ \numlistDef{0.4471313} $,  implying that the simulated data are not inconsistent with the slow drift model.
	
	The standard deviations of $\probPQ$ in the Dirichlet distribution $\Dir(\boldsymbol{\alpha}^*)$ are $\numlistDef{5.363933e-07}$ and $\numlistDef{9.828273e-07}$, respectively, producing the {$1\sigma$} estimates
	\[
	{ P}_+  = (2.07 \pm 0.05) \times 10^{-5},
	\quad
	{ P}_-  = (6.96 \pm 0.10) \times 10^{-5} \quad (\text{``slow drift'' estimate})
	\]
	which compare well with  $\mu_{\pm} \pm \sigma_{\pm}$ accessible for  the simulated  noise,
	\[ 
	\probP =(2.11 \pm 0.08) \times 10^{-5}
	, \quad
	\probQ= (6.94 \pm 0.14) \times 10^{-5} \quad (\text{simulated noise})
	\]

	\subsubsection{Simulations of a single fluctuator}	
	We have explored the excess noise model with a single fluctuator ($M=1$) with the switching rate 
	$\Gamma_1$ varied  in simulations from 
	$ 10^{-6} \tau_0^{-1}$ to 
	$\tau_0^{-1}$. The distribution of
	$P_{\pm}(\mathrm{t})$ in this case is bimodal, with  the two modes,
	$P_{\pm}^{(1)}[0]$ and  $P_{\pm}^{(1)}[1]$, chosen randomly before the start of the simulation.
	The corresponding noise power spectral density is a Lorentzian that crosses over from a constant to $1/f^2$ at frequency $f \sim \Gamma_1$.   
	
	With respect to the baseline and the ``slow drift'' statistical tests, the single-fluctuator model behaves similarly to the $1/f$ case considered above, regardless of the choice of $\Gamma_1$:
	a sufficiently large distance between the two modes generates counts inconsistent with the baseline, but compatible with the ``slow drift'' model  with an appropriately chosen $\Delta P_{\pm}$. 
	
	The results of ``fast flcutuator'' tests do, in general, depend on the switching speed $\Gamma_1$: for fast $\Gamma_1 \sim \tau_0^{-1}$ the simulated data typically are consistent with the ``fast fluctuator''  model, while for slower $\Gamma_1 \ll \tau_0^{-1}$ the  simulated  data reject  the ``fast fluctuator'' noise model.
	
	These observations can be understood considering the characteristic  frequencies to which our statistical tests are sensitive: ``slow drift''  probes the low-frequency part of the noise spectrum (switching rates on the order of $1/( N_i \tau_0) \sim 10^{-5} \tau_0^{-1}$) while the ``fast fluctuator'' tests  relatively high frequencies ($\sim \tau_0^{-1}$), at which the noise from a slow ($\Gamma_1 \ll \tau^{-1}$) single fluctuator is sufficiently suppressed by the  $1/f^2$ roll-off.

	\section{Slow drift model: variance of random walker's position}\label{sec:deltaX2} 
	In this section we derive the formula for the variance of the random walker's position $\Delta x^2$ used in the concluding part of the main text.
	
	Consider repeatedly sampling random walker's position in the slow drift model; we are interested in the sample variance. However, the  samples are correlated, since the model assumes using the same parameter $ \boldsymbol \theta \sim \Dir(\boldsymbol \alpha) $ for several ($ N $)  successive random walks.  Henceforth, we consider the following scenario: 
	draw $ \boldsymbol{\theta}^{(1)}\sim \Dir (\boldsymbol\alpha) $ and run $ t  $ steps of  the random walk with step probabilities $ \boldsymbol{\theta}^{(1)} $ for $ N $ times;   let $ X_1^{(1)} $, $ X_2^{(1)} $, $\ldots$, $ X_N^{(1)} $ be the random walker's position after the respective random walk has been completed.  Afterwards, the step probabilities reset, i.e., a new parameter $ \boldsymbol \theta^{(2)} \sim \Dir(\boldsymbol \alpha) $  is independently drawn, and $ N $ times random walk of $ t $ steps is run with step probabilities $ \boldsymbol{\theta} $, resulting in  random variables $ X_1^{(2)} $, $ X_2^{(2)} $, $\ldots$, $ X_N^{(2)} $. The process is continued until, say, $ K $ blocks  $ {\mb X^{(k)}} := \lr{X^{(k)}_1,X^{(k)}_2,\ldots,X^{(k)}_N} $, $ k=1,2,\ldots,K $, are obtained.  It is important to stress that
	\begin{enumerate}
		\item the blocks $ {\mb X^{(1)}} ,  {\mb X^{(2)}} , \ldots,  {\mb X^{(K)}}   $ are assumed to be pairwise independent;
		\item within each block, variables $ X_i^{(k)} $ and $ X_j^{(k)} $, $ i\neq j $, $ k =1,2,\ldots, K $, are assumed to be conditionally independent given $ \boldsymbol \theta^{(k)} $, i.e., 
		\[ 
		(X_i^{(k)}  \perp \!\!\!\perp X_j^{(k)}) \mid \boldsymbol \theta^{(k)}.
		\]
	\end{enumerate}
	Let $ M:=KN $; we are interested in the quantity
	\[ 
	S := \frac{1}{M} \sum_{k=1}^{K} \sum_{i=1}^{N}  \lr{X^{(k)}_i   - \barX }^2,
	\]
	where $ \barX :=  \frac{1}{M}\sum_{k=1}^{K} \sum_{i=1}^{N} X_i^{(k)} $. In particular, the  task is to find $ \Delta x^2 :=  \Expect(S) $ in the $ K \to \infty $ limit.
	
	\begin{claim}
		The expectation  $ \Delta x^2 =  \Expect(S) $ in the $P_{\pm}\ll 1$, $ K \to \infty $ limit   satisfies
		\[ 
		\Delta x^2  = 
		\frac{A }{1+\alpha_\bullet}\lr{\frac{ t\alpha_ \bullet(M-1) }{M}   + \frac{t^2 (K-1)}{K} }
		\approx (\langle \probP \rangle +\langle \probPQ \rangle) \, t + (\Delta\probP ^2 +\Delta \probQ^2) \, t^2 ,
		\]
		where
		\begin{align*}
		& A :=  \lr{\tilde {\alpha}_{-1}+\tilde {\alpha}_{1}  } - \lr{\tilde {\alpha}_{-1}-\tilde {\alpha}_{1}}^2   ;\\ 
		& \tilde \alpha_{\pm 1} :=  \frac{\alpha_{\pm 1}}{\alpha_\bullet};  \quad \alpha_\bullet := \alpha_{-1}+\alpha_0+\alpha_1;\\
		&  \langle \probPQ \rangle := \Expect(\probPQ ) =   \tilde \alpha_{\pm 1};\\
		& \Delta\probPQ ^2 := \Var(\probPQ ) = \frac{\tilde \alpha_{\pm 1} (1- \tilde \alpha_{\pm 1})}{1+ \alpha_\bullet  }.
		\end{align*}
	\end{claim}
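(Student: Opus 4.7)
The plan is to apply the law of total variance/expectation, exploiting the two-level conditional independence structure of the model: within block $k$ the walks $X_i^{(k)}$ are conditionally i.i.d.\ given $\boldsymbol\theta^{(k)}$, and across blocks the $\boldsymbol\theta^{(k)}$ are i.i.d.\ $\Dir(\boldsymbol\alpha)$. First I would rewrite the sample variance in the algebraically convenient form
\[
S = \frac{1}{M}\sum_{k=1}^{K}\sum_{i=1}^{N}(X_i^{(k)})^2 - \barX^{2},
\]
so that $\Expect(S) = \Expect\bigl((X_1^{(1)})^2\bigr) - \Expect(\barX^{2})$ by exchangeability of the $X_i^{(k)}$.

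Next I would compute the conditional moments from the baseline random walk of Supplementary Note \ref{sec:basic_model}. Writing $\mu_\theta := \theta_1 - \theta_{-1}$ and $A_\theta := (\theta_1+\theta_{-1}) - (\theta_1-\theta_{-1})^2$, the multinomial identities give
\[
\Expect(X_i^{(k)}\mid \boldsymbol\theta^{(k)}) = t\,\mu_{\theta^{(k)}}, \qquad \Var(X_i^{(k)}\mid \boldsymbol\theta^{(k)}) = t\,A_{\theta^{(k)}}.
\]
Then I would average over the Dirichlet using $\Expect(\theta_i) = \tilde\alpha_i$, $\Var(\theta_i) = \tilde\alpha_i(1-\tilde\alpha_i)/(1+\alpha_\bullet)$, $\Cov(\theta_i,\theta_j) = -\tilde\alpha_i\tilde\alpha_j/(1+\alpha_\bullet)$, which yields the clean identity $\Var(\mu_\theta) = A/(1+\alpha_\bullet)$ and consequently $\Expect(A_\theta) = A\,\alpha_\bullet/(1+\alpha_\bullet)$. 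Combining by total variance gives
\[
\Expect\bigl((X_i^{(k)})^2\bigr) = \frac{A}{1+\alpha_\bullet}\bigl(t\,\alpha_\bullet + t^{2}\bigr) + t^{2}(\tilde\alpha_1 - \tilde\alpha_{-1})^{2}.
\]

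For $\Expect(\barX^{2}) = \Var(\barX) + (\Expect\barX)^{2}$, the covariance decomposition is the careful step: pairs $(k,i)\neq(k',i')$ with $k\neq k'$ contribute zero by block-independence; pairs within the same block but $i\neq i'$ contribute $t^{2}\Var(\mu_\theta) = t^{2}A/(1+\alpha_\bullet)$ from the conditional-independence identity, counted $K N(N-1) = M(N-1)$ times; the $M$ diagonal terms contribute $\Var(X_i^{(k)})$. Collecting factors, $\Var(\barX) = A\bigl(t\alpha_\bullet + Nt^{2}\bigr)/[M(1+\alpha_\bullet)]$, and subtracting from $\Expect\bigl((X_1^{(1)})^2\bigr)$ gives
\[
\Expect(S) = \frac{A}{1+\alpha_\bullet}\left[t\alpha_\bullet\,\frac{M-1}{M} + t^{2}\,\frac{M-N}{M}\right],
\]
with $(M-N)/M = (K-1)/K$ since $M=KN$. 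Taking $K\to\infty$ and invoking $P_{\pm}\ll 1$ drops $(\tilde\alpha_1 - \tilde\alpha_{-1})^{2}$ so that $A \approx \tilde\alpha_1 + \tilde\alpha_{-1} = \langle\probP\rangle + \langle\probQ\rangle$, while $\alpha_\bullet/(1+\alpha_\bullet)\to 1$ for the concentrated Dirichlet in the physical regime; the $t$-coefficient becomes $\langle\probP\rangle+\langle\probQ\rangle$ and the $t^{2}$-coefficient matches $\Delta\probP^{2} + \Delta\probQ^{2}$ upon again dropping $(1-\tilde\alpha_{\pm 1})$.

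The main obstacle is the bookkeeping of the block covariance structure: one must resist the temptation to declare the $X_i^{(k)}$ uncorrelated within a block (they are only \emph{conditionally} so), and one must correctly count the $KN(N-1)$ within-block off-diagonal pairs that produce the extra $t^{2}$ piece responsible for the $(\Delta\probP^{2}+\Delta\probQ^{2})\,t^{2}$ super-linear-in-$t$ term characteristic of the slow-drift regime. Everything else reduces to plugging Dirichlet moments into the total-variance decomposition.
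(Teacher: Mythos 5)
Your proposal is correct and follows essentially the same route as the paper's proof: the same rewriting $S=\frac{1}{M}\sum_{i,k}(X_i^{(k)})^2-\barX^2$, the same conditional moments $\Expect(X\mid\boldsymbol\theta)=t(\probP-\probQ)$ and $\Var(X\mid\boldsymbol\theta)=t(\probP+\probQ)-t(\probP-\probQ)^2$, the same Dirichlet identities yielding $\Var(\probP-\probQ)=A/(1+\alpha_\bullet)$, and the same within-block covariance $t^2A/(1+\alpha_\bullet)$ from the law of total covariance. The only cosmetic difference is that you count the $KN(N-1)$ within-block pairs directly in $\Var(\barX)$, whereas the paper packages the same computation through the block-average variable $Y$ with $\Var(\barX)=\Var(Y)/K$; the resulting expression and limiting approximation agree exactly.
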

	\begin{proof}
		We start by noting that each block $ \mb X^{(k)} $ is an independent observation of the random variable $ \mb X = (X_1, X_2, \ldots, X_N) $, where $ \mb  X$ is obtained by the process above with $ K=1 $. Let us define a random variable  $ 	Y=  \frac{1}{N} (X_1 + \ldots + X_N) $,  
		then  there  are $ K $   iid variables $ Y_1, \ldots, Y_K \sim Y $ corresponding to the blocks $ \mb X^{(1)} $, \ldots, $ \mb X^{(K)} $ . We can express
		\[ 
		S= 
		\frac{1}{M} \sum_{i,k}  \lr{X^{(k)}_i   }^2   - \frac{2}{M}  \sum_{i,k}  \lr{X^{(k)}_i   } \barX  + \barX^2
		=
		\frac{1}{M} \sum_{i,k}  \lr{X^{(k)}_i  }^2 - \barX ^2
		\]
		and
		\[ 
		\barX = \frac{1}{K} (Y_1 + Y_2 + \ldots + Y_K).
		\]

		Since $ X^{(k)}_i $ are identically distributed, we have
		\[ 
		\Expect\lr{\frac{1}{M}\sum_{k=1}^{K} \sum_{i=1}^{N}  \lr{X^{(k)}_i  }^2}
		=
		\frac{M}{M} \cdot \Expect(X_1^2) =  \Var(X_1) + \Expect(X_1)^2;
		\]
		since $ Y_1, \ldots, Y_K $ are iid, we have
		\[ 
		(Y_1 + Y_2 + \ldots + Y_K) ^2 = \sum_{k=1}^K Y_k^2  +  \sum_{k=1}^K \sum_{\substack{l=1\\ l \neq k}}^K Y_k Y_l
		\]
		and
		\[ 
		\Expect \lr{ \barX^2}  = \frac{1}{K} \Expect(Y^2)  +\frac{K-1}{K} \Expect(Y)^2 =  \frac{1}{K}   \Var(Y)  + \Expect(Y)^2 .
		\]
		Consequently,
		\[ 
		\Delta x^2 =   \Var(X_1) + \Expect(X_1)^2 -  \frac{1}{K}   \Var(Y)  - \Expect(Y)^2 .
		\]
		Let us show that 
		\begin{align}
		&  \Expect(X_1) = t \lr{\tilde \alpha_1  -  \tilde \alpha_{-1}}  \label{eq:Xi_expect} \\
		& \Var(X_1) = \frac{A t\alpha_\bullet  }{1+\alpha_\bullet} +\frac{A t^2  }{1+\alpha_\bullet}
		,  \label{eq:Xi_var} \\
		& \Expect(Y) = \Expect(X_1) , \label{eq:Y_expect} \\
		&   \Var(Y)= \frac{A t\alpha_\bullet  }{N(1+\alpha_\bullet)} + \frac{A t^2  }{1+\alpha_\bullet} 
		\label{eq:varY},
		\end{align}
		then we will arrive at 
		\[ 
		\Delta x^2 
		=   \Var(X_1)  -  \frac{1}{K}   \Var(Y) 
		=
		\frac{A t\alpha_\bullet (M-1) }{M(1+\alpha_\bullet)}  +  \frac{A t^2 (K-1) }{K(1+\alpha_\bullet)},
		\]
		as desired.

		\textbf{Equalities \eqref{eq:Xi_expect} and \eqref{eq:Xi_var}.} 
		If $\boldsymbol \theta = (\probQ,\probO,\probP)   $ is a fixed parameter, then
		the position of the random walker after $ t $ steps with step probabilities given by  $ \boldsymbol \theta $ is given by
		$   \delta_1 + \ldots + \delta_t $, where $ \delta_i \sim \delta $ are iid and
		\[ 
		\delta = 
		\begin{cases}\
		-1, & \text{ with prob. }  \probQ, \\
		0, & \text{ with prob. } \probO,    \\
		+1, & \text{ with prob. } \probP.
		\end{cases}
		\]
		Consequently, for the Dirichlet-distributed $ \boldsymbol\theta $ the conditional expectation / variance $ \Expect (X_1 \mid  \boldsymbol\theta)  $ and $ \Var (X_1 \mid \boldsymbol \theta)   $
		satisfy
		\begin{align*}
		& \Expect (X_1 \mid  \boldsymbol\theta) =     t \lr{\probP  -\probQ}\\
		& \Var (X_1 \mid \boldsymbol \theta)  = t  \lr{\probP  + \probQ }- t \lr{\probP  -  \probQ}^2.
		\end{align*}
		By the laws of total expectation / variance, we obtain  \eqref{eq:Xi_expect} and \eqref{eq:Xi_var}:
		\begin{align*}
		\Expect (X_1)   & = \Expect \lr {\Expect (X_1 \mid  \boldsymbol\theta)} = t(\tilde \alpha_1  -  \tilde \alpha_{-1})\\
		\Var(X_1)  &  = \Expect(\Var(X_1 \mid \boldsymbol \theta)) + \Var (\Expect (X_1 \mid \boldsymbol \theta)) \\
		& = t \Expect\lr{ \probP  + \probQ   + 2 \probP \probQ   -  \probP^2 - \probQ^2  }  + t^2 \Var \lr{ \probP - \probQ } \\
		&  =  \frac{t \alpha_\bullet A}{1+\alpha_\bullet}  + \frac{t^2A}{1+\alpha_\bullet}.
		\end{align*} 
		To show the last equality, recall the relevant properties of   $ (\probQ,\probO,\probP)  \sim \Dir(\boldsymbol\alpha)  $:
		\begin{align*}
		& \Expect(\probPQ) = {\tilde \alpha} _{\pm 1};\\
		& \Var(\probPQ ) = \frac{\tilde \alpha_{\pm 1} (1- \tilde \alpha_{\pm 1})}{1+ \alpha_\bullet  }; \\
		&  \Cov(\probP ; \probQ)  = \frac{-\tilde \alpha_{-1}  \tilde \alpha_1}{1+ \alpha_\bullet}.
		\end{align*}
		Thus
		\begin{align*}
		& \Var \lr{ \probP - \probQ } = \Var \lr{ \probP  } +  \Var \lr{ \probQ  }  - 2 \Cov(\probP ; \probQ)   = \frac{A}{1+\alpha_\bullet}
		\end{align*}
		and
		\begin{align*}
		& \Expect\lr{ \probP  + \probQ   + 2 \probP \probQ   -  \probP^2 - \probQ^2  }  \\
		&=
		\Expect\lr{\probP  + \probQ}  +2\lr{ \Cov(\probP ; \probQ)  +  \Expect(\probP) \Expect(  \probQ) }-  \lr { \Var \lr{ \probP  } + \Expect(\probP)^2}-  \lr{\Var \lr{ \probQ  }   + \Expect(\probQ)^2}\\
		&=
		\Expect\lr{\probP  + \probQ} -\lr{ \Expect\lr{ \probP - \probQ }}^2
		- \Var \lr{ \probP - \probQ } 
		\\
		& = A
		- \frac{A}{1+\alpha_\bullet} 
		= \frac{\alpha_\bullet A}{1+\alpha_\bullet}. 
		\end{align*}

		\textbf{Equality \eqref{eq:Y_expect}.}  This equality trivially follows from the definition of $ Y $ and the fact that $ X_i $ are identically distributed:
		\[ 
		\Expect(Y) = \frac{1}{N} \Expect(X_1 + \ldots +X_N) = \frac{N}{N} \Expect(X_1).
		\]

		\textbf{Equality \eqref{eq:varY}.}   The variance of  the sum $ X_1 + \ldots + X_N $ is the sum of the covariances:
		\[ 
		\Var(X_1 + \ldots + X_N) = \sum_{i=1}^N \Var(X_i) + 2 \sum_{1 \leq i <  j <N}\Cov (X_i, X_j)
		\]
		However, $ X_i $ are identically distributed, therefore
		\begin{equation}\label{eq:varY-1}
		\Var (Y) = \frac{1}{N^2} \lr{  N \Var (X_1) +  (N^2-N) \Cov (X_1, X_2) }
		\end{equation}
		$ \Var(X_1) $ is given by \eqref{eq:Xi_var}, it remains to find $ \Cov (X_1, X_2)  $. By the {law of total covariance},
		\[ 
		\Cov(X_1, X_2) = 
		\Expect\lr{ \Cov\lr{X_1, X_2 \mid \boldsymbol \theta }  }
		+ 
		\Cov\lr{ \Expect\lr{ X_1 \mid \boldsymbol \theta } , \Expect\lr{ X_2 \mid \boldsymbol \theta }  }
		\]
		Since $ X_1, X_2 $ are conditionally independent given $ \boldsymbol \theta $, the conditional covariance vanishes: $  \Cov\lr{X_1, X_2 \mid \boldsymbol \theta } =0 $. Moreover, as $ X_1, X_2 $ are identically distributed,
		\[ 
		\Expect\lr{ X_1 \mid \boldsymbol \theta } = \Expect\lr{ X_2 \mid \boldsymbol \theta } = t \lr{\probP - \probQ  },
		\]
		thus
		\[ 
		\Cov(X_1, X_2) = t^2 \Var \lr{\probP - \probQ  } =\frac{t^2 A }{1+\alpha_\bullet}.
		\]
		We arrive at
		\begin{equation*} 
		\Var(Y)=
		\frac{1}{N} \lr{
			A \cdot\frac{t\alpha_\bullet + t^2}{1+\alpha_\bullet} + (N-1)\frac{t^2 A }{1+\alpha_\bullet}
		}
		=
		\frac{A t\alpha_\bullet  }{N(1+\alpha_\bullet)} + \frac{A t^2  }{1+\alpha_\bullet} ,
		\end{equation*}
		which concludes the proof.
	\end{proof}
	
	\section{Proof for spread condition}\label{sec:proof-spread-condition}
	
	This section contains the proof for the spread condition shown in \eqref{eq:sc} in the main text (restated here for convenience).
	\begin{claim}
		For any random process on the line with steps of length 1, the spread condition
		\begin{equation}\label{eq:sc2}
		\sum\limits_{y=-\infty}^{x-1}p_y^t \leq \sum\limits_{y=-\infty}^{x}p_y^{t+1} \leq \sum\limits_{y=-\infty}^{x+1}p_y^{t} \quad \text{for all } x,
		\end{equation}
		must be satisfied.
	\end{claim}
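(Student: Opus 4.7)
The plan is to interpret the partial sums as cumulative distribution functions and reduce the statement to two one-line set inclusions enforced by the step-size constraint. Let $X_t$ denote the random position of the walker at time $t$, so that $p_y^t = \Pr(X_t = y)$, and introduce the CDF $F^t(x) := \sum_{y=-\infty}^{x} p_y^t = \Pr(X_t \le x)$. In this notation the spread condition \eqref{eq:sc2} is exactly
\begin{equation*}
F^t(x-1) \;\le\; F^{t+1}(x) \;\le\; F^t(x+1) \qquad \text{for all } x \in \mathbb{Z}.
\end{equation*}

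First I would prove the left inequality. The assumption that every step has length at most one means $|X_{t+1} - X_t| \le 1$ almost surely; in particular $X_{t+1} \le X_t + 1$. Consequently the event $\{X_t \le x-1\}$ is (almost-surely) contained in $\{X_{t+1} \le x\}$, and taking probabilities gives $F^t(x-1) \le F^{t+1}(x)$. For the right inequality, the same bound rewritten as $X_t \le X_{t+1} + 1$ yields the inclusion $\{X_{t+1} \le x\} \subseteq \{X_t \le x+1\}$, and passing again to probabilities gives $F^{t+1}(x) \le F^t(x+1)$. Re-expressing the CDFs as partial sums of $p_y^{\,\cdot}$ returns the statement in the original form.

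The argument is essentially a one-line observation, so I do not anticipate any substantive obstacle. The only point worth pinning down explicitly is the interpretation of \emph{steps of length at most one} as the almost-sure bound $|X_{t+1} - X_t| \le 1$, which is the natural reading and is precisely what powers the nested-event argument. Notably, neither the Markov property nor time-homogeneity nor position-independence of the step distribution is invoked; the conclusion depends only on the joint law of consecutive positions being supported on pairs $(X_t, X_{t+1})$ with $|X_{t+1}-X_t|\le 1$, which is exactly the high-fidelity hypothesis used in the main text to motivate the condition.
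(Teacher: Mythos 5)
Your proof is correct and is essentially the same argument as the paper's: the paper also observes that a walker at $y\le x-1$ at time $t$ must be at $y'\le x$ at time $t+1$ (and the symmetric statement for the other inequality), which is precisely your event-inclusion $\{X_t\le x-1\}\subseteq\{X_{t+1}\le x\}$ phrased in words. Your closing remark that neither the Markov property nor homogeneity is needed matches the paper's own note following its proof.
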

	
	\begin{proof}
		If the random process is at location $y \leq x-1$ after $t$ time steps, it must be at a location $y'\leq x$ after $t+1$ time steps. Hence, $\sum_{y=-\infty}^{x-1}p_y^t\leq\sum_{y=-\infty}^{x}p_y^{t+1}$. 
		
		On the other hand, if the process is at a location $y\leq x$ after $t+1$ time steps, it has been at a location $y'\leq x+1$ after $t$ time steps. Hence, $\sum_{y=-\infty}^{x}p_y^{t+1}\leq\sum_{y=-\infty}^{x+1}p_y^{t}$.
	\end{proof}
	We note that the claim applies not just to Markov processes but to any random process that can move at most distance 1 in one time step. For example, it applies to processes where the transition probabilities depend not just on the current location but also on locations in previous time steps.
	
	\begin{claim}\label{thm:spread_prt2}
		If two probability distributions $(p_x^t)_{x\in \mbb Z}$ and $(p_x^{t+1})_{x\in \mbb Z}$ satisfy
		the spread condition \eqref{eq:sc2},
		then there exists a set of transition probabilities $P_{\pm 1}^{(x,t)}$ such that $(p_x^{t+1})_{x\in \mbb Z}$ is generated from $(p_x^t)_{x\in \mbb Z}$ by a Markov process
		\[
		p_{x}^{t+1}= 
		P_{+1}^{(x-1,t)} p_{x-1}^t 
		+
		\lr {1   - P_{+1}^{(x,t)}-P_{-1}^{(x,t)}} p_{x}^t 
		+
		P_{-1}^{(x+1,t)} p_{x+1}^t , 
		\quad
		x \in \mbb Z.
		\]
		
	\end{claim}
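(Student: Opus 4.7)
The plan is to reformulate the spread condition in terms of cumulative distribution functions and then construct the transition probabilities as nearly-unidirectional probability fluxes across each bond. Introduce $F^t(x) := \sum_{y\le x} p_y^t$; condition \eqref{eq:sc2} becomes the compact sandwich $F^t(x-1) \le F^{t+1}(x) \le F^t(x+1)$ for all $x$. For any Markov chain of the form required by the claim, summing the one-step update over all $y \le x$ collapses to a telescope and yields $F^{t+1}(x) - F^t(x) = -j_x$, where $j_x := P_{+1}^{(x,t)} p_x^t - P_{-1}^{(x+1,t)} p_{x+1}^t$ is the net rightward flux across the bond $(x, x+1)$. Hence the data already pin down $j_x = F^t(x) - F^{t+1}(x)$; only the decomposition of $j_x$ into individual right- and left-moving currents is still free.

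I would exploit this freedom by setting $f_x := \max(0, j_x)$ and $g_x := \max(0, -j_x)$, so that $j_x = f_x - g_x$ with $f_x, g_x \ge 0$ and $f_x \, g_x = 0$ (probability moves in at most one direction across each bond). Then define $P_{+1}^{(x,t)} := f_x/p_x^t$ and $P_{-1}^{(x,t)} := g_{x-1}/p_x^t$; at any site where $p_x^t = 0$ the spread condition itself forces $f_x = g_{x-1} = 0$, so $P_{\pm 1}^{(x,t)}$ may be assigned any value in $[0,1]$ without affecting the chain. Plugging these definitions back into the Markov update gives $p_x^{t+1} - p_x^t = j_{x-1} - j_x$, which by the formula $j_x = F^t(x) - F^{t+1}(x)$ is automatically an identity. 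So the Markov representation holds as soon as the probabilities themselves are admissible.

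The substance of the proof is verifying admissibility. Nonnegativity $P_{\pm 1}^{(x,t)} \ge 0$ is immediate, so the only real work is checking $P_{+1}^{(x,t)} + P_{-1}^{(x,t)} \le 1$, i.e.\ $f_x + g_{x-1} \le p_x^t$. This reduces to a short case split on the signs of $j_x$ and $j_{x-1}$: if only $f_x > 0$, then $f_x = F^t(x) - F^{t+1}(x) \le F^t(x) - F^t(x-1) = p_x^t$ by the lower spread inequality; if only $g_{x-1} > 0$, then $g_{x-1} = F^{t+1}(x-1) - F^t(x-1) \le F^t(x) - F^t(x-1) = p_x^t$ by the upper spread inequality; and in the mixed case $j_x \ge 0 \ge j_{x-1}$, telescoping gives $f_x + g_{x-1} = p_x^t - p_x^{t+1} \le p_x^t$. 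This verification is the main (though routine) step; conceptually, the hard part is recognizing that the spread condition is precisely the CDF sandwich that licenses the one-directional flux decomposition at every bond simultaneously, which in turn forces the per-site sum bound via telescoping.
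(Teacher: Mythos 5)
Your construction is correct and is essentially the paper's own proof: your $f_x/p_x^t$ and $g_{x-1}/p_x^t$ are exactly the piecewise-defined $P_{+1}^{(x,t)}=(q_x^t-q_x^{t+1})/p_x^t$ and $P_{-1}^{(x,t)}=(q_{x-1}^{t+1}-q_{x-1}^t)/p_x^t$ of the paper (with the same vanishing cases), and the admissibility check $f_x+g_{x-1}\le p_x^t$ is the paper's verification that $P_0^{(x,t)}\ge 0$, via the same three-way case split. The only differences are presentational --- your telescoping-flux identity replaces the paper's case analysis for showing the chain reproduces $q_x^{t+1}$, and you handle $p_x^t=0$ slightly more explicitly.
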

	\begin{proof}
		Let $q_x^t=\sum_{y=-\infty}^{x}p_y^{t}$; then the spread condition is equivalent to
		\[
		q^t_{x-1} \leq q^{t+1}_{x} \leq q^t_{x+1}, 
		\quad \text{for all }x \in \mbb Z.
		\]
		Consider a Markov process in which the probability of moving left from a location $x$ at time $t$ is defined by
		\begin{align*}
		P_{-1}^{(x,t)}=
		\begin{cases}
		0 &\text{if }q_{x-1}^t \geq q_{x-1}^{t+1}\\
		\frac 1{p_x^t}\lr{{q_{x-1}^{t+1}-q_{x-1}^t}} & \text{otherwise},
		\end{cases}
		\end{align*}
		and the probability of moving right is defined by
		\begin{align*}
		P_{+1}^{(x,t)}= 
		\begin{cases}
		0 
		&\text{if }q_{x}^t \leq q_{x}^{t+1}\\
		\frac 1 {p_{x}^{t}}  \lr{{q_{x}^{t}-q_{x}^{t+1}}}
		&\text{otherwise}.
		\end{cases}
		\end{align*}
		The probability to stay at $x$ is defined as $P_{0}^{(x,t)}:=1-P_{+1}^{(x,t)}-P_{-1}^{(x,t)}$.
		Notice that the spread condition implies
		\[
		q_{x-1}^{t+1} - q_{x-1}^t \leq q_{x}^t -q_{x-1}^t =  p_x^t
		\quad \text{and} \quad
		q_{x}^t - q_{x}^{t+1} =q_{x-1}^{t}+p^t_x  -q_{x}^{t+1}   \leq  p^t_x,
		\]
		thus $P_{-1}^{(x,t)}\leq 1$ and $P_{+1}^{(x,t)} \leq 1$. 
		Finally, we have $P_{0}^{(x,t)}\geq 0$. To see that, it suffices to consider the case when $P_{+1}^{(x,t)}$ and $P_{-1}^{(x,t)}$ are both positive; then we have to show 
		\[
		\lr{{q_{x}^{t}-q_{x}^{t+1}}}
		+
		\lr{{q_{x-1}^{t+1}-q_{x-1}^t}} \leq p^t_x.
		\]
		However, this inequality clearly holds, since
		\[
		\lr{{q_{x}^{t}-q_{x}^{t+1}}}
		+
		\lr{{q_{x-1}^{t+1}-q_{x-1}^t}} 
		=
		\lr{{q_{x}^{t}-q_{x-1}^t}}
		+
		\lr{{q_{x-1}^{t+1}-q_{x}^{t+1}}} 
		=p^t_x - p^{t+1}_x
		\leq p^t_x.
		\]
		Therefore we have defined valid transition probabilities.

		To see that this process produces $(p_x^{t+1})_{x \in \mbb Z}$, let $\overline{q}_x^{t+1}$ be the probability of being at a location $x'\leq x$ after applying these transition probabilities to the distribution $(p_x^{t})_{x \in \mbb Z}$. We show that $\overline{q}_x^{t+1}=q_x^{t+1}$ for all $x$, thus the probability of being at any particular $x_0$ equals $q_{x_0}^{t+1}-q_{x_0-1}^{t+1}=p_{x_0}^{t+1}$.
		We consider two cases. 
		\begin{enumerate}
			\item If $q_x^{t}\leq q_x^{t+1}$, we have $P_{+1}^{(x,t)}=0$ and
			\begin{align*}
			\overline{q}_x^{t+1}
			=
			q_x^{t}+p_{x+1}^{t}P_{-1}^{(x+1,t)}
			=
			q_{x}^{t}+\left(q_{x}^{t+1}-q_{x}^{t} \right)
			=
			q^{t+1}_{x} .
			\end{align*}
			\item  
			If $q_{x}^{t}>q_{x}^{t+1}$, we have $P_{-1}^{(x+1,t)}=0 $ 
			and 
			\begin{align*}
			\overline{q}_{x}^{t+1}
			=
			q_{x-1}^{t}+p_{x}^{t}
			\left(1 - P_{+1}^{(x,t)}\right)
			=
			q_{x-1}^{t}+p_{x}^{t}-\left( q_{x}^{t}-q_{x}^{t+1} \right)
			=
			q_{x}^{t+1} , 
			\end{align*}
		\end{enumerate}
		which concludes the proof.
	\end{proof}
	A consequence of these two claims is that, given just the probabilities $p_x^t$, we cannot distinguish whether they come from a (possibly non-stationary, not translation-invariant) Markov process or from a more general process that moves at most distance 1 in one time step. (In the second case, the spread condition will be satisfied and then, because of Claim \ref{thm:spread_prt2}, there will be a time and location dependent Markov process that gives the same $p_x^t$.)

\end{document}